\theoremstyle{plain}
\newtheorem{lemma}{Lemma}
\newtheorem{theorem}{Theorem}
\newtheorem{proposition}{Proposition}
\theoremstyle{definition}
\newtheorem{definition}{Definition}
\newtheorem{construction}{Construction}
\theoremstyle{remark}
\newtheorem{example}{Example}
\newtheorem{remark}{Remark}
\tikzset{>=stealth}
\tikzstyle{node} = [circle, minimum size = 1.4mm, inner sep = 0mm, color=black, fill]
\tikzstyle{hyperedge} = [rectangle, minimum width = 5mm, minimum height = 5mm, draw, inner sep = 0mm]
\tikzstyle{HG} = [align = center]
\tikzstyle{circledge} = [circle, minimum size = 7mm, inner sep = 0mm, color=black, draw]
\newcommand{\eqdef}{\mathrel{\mathop:}=}
\newcommand{\LC}{\mathrm{L}}
\newcommand{\I}{\mathrm{\mathbf{1}}}
\newcommand{\Dis}{\mathrm{D}}
\newcommand{\HL}{\mathrm{HL}}
\newcommand{\HMEL}{\mathrm{HMEL}_0}
\newcommand{\MILLFO}{\mathrm{MILL1}}
\newcommand{\ILLFO}{\mathrm{ILL1}}
\newcommand{\LIKS}{\mathrm{L}_{\I\omega}^\ast}
\newcommand{\DPO}{\mathrm{DPO}}
\title{From Double Pushout Grammars to Hypergraph Lambek Grammars With and Without Exponential Modality}
\author{Tikhon Pshenitsyn
	\institute{Department of Mathematical Logic and Theory of Algorithms, Faculty of Mathematics and Mechanics
	\\
	Lomonosov Moscow State University, GSP-1, Leninskie Gory, Moscow, 119991, Russian Federation}
	\thanks{The study was funded by RFBR, project N20-01-00670; the Interdisciplinary Scientific and Educational School of Moscow University ``Brain, Cognitive Systems, Artificial Intelligence''; the Theoretical Physics and Mathematics Advancement Foundation ``BASIS''.
	}
	\email{ptihon at yandex.ru}
}
\begin{document}
	\maketitle
\begin{abstract}
	We study how to relate well-known hypergraph grammars based on the double pushout (DPO) approach and grammars over the hypergraph Lambek calculus $\HL$ (called $\HL$-grammars). It turns out that DPO rules can be naturally encoded by types of $\HL$ using methods similar to those used by Kanazawa for multiplicative-exponential linear logic. In order to generalize his reasonings we extend the hypergraph Lambek calculus by adding the exponential modality, which results in a new calculus $\HMEL$; then we prove that any DPO grammar can be converted into an equivalent $\HMEL$-grammar. We also define the conjunctive Kleene star, which behaves similarly to this exponential modality, and establish a similar result. If we add neither the exponential modality nor the conjunctive Kleene star to $\HL$, then we can still use the same encoding and show that any DPO grammar with a linear restriction on the length of derivations can be converted into an equivalent $\HL$-grammar. 
\end{abstract}

\section{Introduction}
In this paper, we aim to relate two kinds of graph grammars: double-pushout (DPO) hypergraph grammars and hypergraph Lambek grammars ($\HL$-grammars). 

DPO hypergraph grammars are one of the most well-known kinds of graph grammars, which were introduced in 1973 \cite{EhrigPS73}. They are designed to generalize unrestricted Chomsky formal grammars from strings to graphs. Recall that a production in a formal grammar of the form $\alpha \Rightarrow \beta$ allows one to replace a substring $\alpha$ in any string $\gamma$ by a string $\beta$. A production of a DPO hypergraph grammar, in turn, can be presented in the form $L \Rightarrow R$ where $L$ and $R$ are two hypergraphs. The procedure of replacing a hypergraph by another hypergraph, however, needs further clarification; this is done by using the double pushout approach, which is widely used in the field of graph grammars.

The hypergraph Lambek calculus $\HL$ and hypergraph Lambek grammars are novel approaches described in \cite{Pshenitsyn21_2, Pshenitsyn22}. They are based on logical grounds: $\HL$ generalizes the Lambek calculus introduced in \cite{Lambek58}. The Lambek calculus $\mathrm{L}$ is a substructural logic of intuitionistic logic, and it is originally designed to model the syntax of natural languages. The hypergraph Lambek calculus $\HL$ inherits the main principles of $\mathrm{L}$, its structural and model-theoretic properties. Besides, $\HL$ forms the basis for hypergraph Lambek grammars ($\HL$-grammars). An $\HL$-grammar is defined by an assignment of a finite number of types (i.e. formulas) of $\HL$ to symbols of an alphabet. With some simplifications, the mechanism of these grammars can be described as follows: in order to check that a hypergraph $H$ is generated by an $\HL$-grammar, one must replace each symbol in $H$ by one of the types corresponding to it (which results in a hypergraph labeled by types) and to check that the resulting structure is derivable from axioms by rules of $\mathrm{HL}$. 

Our objective is to figure out what class of hypergraph languages $\HL$-grammars generate. In particular, it is clearly important to compare them with widely studied DPO grammars. The following question arises: can we convert each DPO grammar into an $\HL$-grammar generating the same language? A simple complexity argument shows that the answer is negative: DPO grammars are Turing complete while the membership problem for $\HL$-grammars is decidable and even NP-complete. Nevertheless, it turns out that there is a simple way to naturally encode any DPO rule as a type of $\HL$. This encoding is essentially the same as the encoding that Kanazawa defined for translating unrestricted Chomsky grammars into grammars over the multiplicative-exponential Lambek calculus \cite{Kanazawa99}. In the hypergraph case, we can also add the exponential modality $!$ to $\HL$ obtaining the hypergraph multiplicative-exponential Lambek calculus; in this paper, we define it in a somehow restricted way, which helps us to prove the cut elimination theorem for it, yet sufficient for our purposes. The new calculus is denoted as $\HMEL$. Then we can show that any DPO grammar can be converted into an $\HMEL$-grammar. In fact, apparently $\HMEL$-grammars are equivalent to DPO grammars; however, in this paper, we prove only one half of this statement, so the other one formally remains a claim.

However, our main goal is still to study the hypergraph Lambek calculus itself without the exponential modality. It turns out that we can establish a nice relationship using the same encoding as for $\HMEL$-grammars; to do this, we need to restrict DPO grammars. Namely, let us equip a DPO grammar $Gr$ with a number $c \in \mathbb{N}$ and let us consider the language of hypergraphs $H$ generated by $Gr$ such that the length of a derivation of $H$ in $Gr$ is less than or equal to $c\cdot |E_H|$ where $|E_H|$ is the number of hyperedges in $H$. That is, we impose the linear restriction on the length of a derivation of a hypergraph with respect to the number of its hyperedges. It turns out that each such language can be generated by an $\HL$-grammar. In fact, the converse statement (from $\HL$-grammars to linearly restricted DPO grammars) holds as well, which is, however, a matter of another article to be written.

In the first version of this paper presented at TERMGRAPH 2022 in Haifa, we introduced an extension of $\HL$ by the \emph{conjunctive Kleene star} instead of the exponential modality. The conjunctive Kleene star behaves similar to the exponential modality $!$ (or, more precisely, to the soft subexponential as in \cite{KanovichKNS20}). However, the conjunctive Kleene star has a different motivation and its axiomatization involves, in particular, the omega rule (see Section \ref{sec_cks}). Such an operation is quite peculiar and weird, and it received little attention in the literature (however, it is mentioned e.g. in \cite{Montagna04} when considering \emph{storage operators}). At the workshop discussion, I was advised to try to use the exponential modality to achieve the same goals, which is more common and well studied. Then, when we started studying the existing literature on the multiplicative-exponential Lambek calculus for strings, we found the article \cite{Kanazawa99} by Kanazawa, where a similar construction is used to the one developed in the current work. So, in this final version of the paper, we do not use the conjunctive Kleene star in the constructions of grammars but the exponential modality. However, we retain the definition of the former in Section \ref{sec_cks} in order not to change the content of the paper too much.

The paper is organized as follows. Section \ref{sec_preliminaries} contains definitions of all formalisms and notions of interest (hypergraphs, hyperedge replacement, DPO grammars, $\HL$, $\HL$-grammars) along with some intuition behind the hypergraph Lambek calculus. In Section \ref{sec_hmel}, we define the hypergraph multiplicative-exponential Lambek calculus $\HMEL$ and generalize the idea from \cite{Kanazawa99} to prove that DPO grammars can be converted into equivalent $\HMEL$-grammars. In Section \ref{sec_lin_dpo_to_hl}, we impose the linear restriction on DPO grammars and present the translation procedure from restricted grammars to $\HL$-grammars. Section \ref{sec_cks} contains the definition of the conjunctive Kleene star and some remarks concerning it (although this operation is not used anymore in transformations). In Section \ref{sec_conclusion}, we conclude.

\section{Preliminaries}\label{sec_preliminaries}

$\Sigma^\ast$ is the set of strings over an alphabet $\Sigma$ including the empty word $\Lambda$; if $R$ is a relation, then $R^\ast$ is its transitive reflexive closure. Each function $f:\Sigma \to \Delta$ can be extended to a homomorphism $f:\Sigma^\ast \to \Delta^\ast$. By $w(i)$ we denote the $i$-th symbol of $w \in \Sigma^\ast$, and by $|w|$ we denote the number of symbols in $w$.

Let $[n]$ denote the set $\{1,2,\dotsc,n\}$ (and $[0] \eqdef \emptyset$ accordingly).

Given a set of \emph{labels} $\Sigma$ along with a \emph{rank function} $rk:\Sigma \to \mathbb{N}$, a \emph{hypergraph} $G$ over $\Sigma$ is a tuple $G=\langle V_G, E_G, att_G, lab_G, ext_G \rangle$ where $V_G$ is a finite set of \emph{nodes}, $E_G$ is a finite set of \emph{hyperedges}, $att_G: E_G\to V_G^\ast$ assigns a string (understand it as an ordered multiset) of \emph{attachment nodes} to each hyperedge, $lab_G: E_G \to \Sigma$ labels each hyperedge by some element of $\Sigma$ in such a way that $rk(lab_G(e))=|att_G(e)|$ whenever $e\in E_G$, and $ext_G\in V_G^\ast$ is a string of \emph{external nodes}.  Hypergraphs are always considered up to isomorphism. The set of all hypergraphs with labels from $\Sigma$ is denoted by $\mathcal{H}(\Sigma)$. Note that we allow attachment nodes of a hyperedge as well as external nodes to coincide. The \emph{rank function} $rk_G$ (or $rk$, if $G$ is clear) is defined as follows: $rk_G(e)\eqdef|att_G(e)|$. Besides, $rk(G)\eqdef |ext_G|$.

In drawings of hypergraphs, black circles correspond to nodes, labeled rectangles correspond to hyperedges, $att$ is represented by numbered lines, and external nodes are represented by numbers in parentheses (round brackets). If a hyperedge has exactly two attachment nodes, then it is depicted by a labeled arrow that goes from the first attachment node to the second one.

A \emph{handle} $a^\bullet$ is a hypergraph $a^\bullet=\langle [n],[1],att,lab,1\dotsc n\rangle$ where $att(1)=1\dotsc n$ and $lab(1)=a$ ($a\in\Sigma$, $rk(a)=n$). A hypergraph $a^\circ$ is of the form $\langle [n], [1], att,lab,\Lambda\rangle$ where $att$, $lab$ are as in the definition of $a^\bullet$. A hypergraph $\Dis[k] = \langle [k],\emptyset,\emptyset,\emptyset,\Lambda\rangle$ is called \emph{discrete} ($k \in \mathbb{N}$).

Given a hypergraph $H$ and a function $f:E_H\to \Sigma$, \emph{a relabeling $f(H)$} is the hypergraph $f(H)=\\\langle V_H, E_H, att_H, f, ext_H\rangle$. It is required that $rk_H(e)=rk(f(e))$ for $e\in E_H$. 

The \emph{replacement of a hyperedge $e_0$ in $G$ by a hypergraph $H$} (such that $rk(e_0)=rk(H)$) is done as follows: (1) remove $e_0$ from $G$; (2) insert an isomorphic copy of $H$ ($H$ and $G$ must consist of disjoint sets of nodes and hyperedges); (3) for each $i=1,\dotsc,rk(e_0)$, fuse the $i$-th external node of $H$ with the $i$-th attachment node of $e_0$. The result is denoted as $G[e_0/H]$. It is well known that if several hyperedges of a hypergraph are replaced by other hypergraphs, then the result does not depend on the order of the replacements; moreover the result is not changed, if replacements are done simultaneously \cite{Drewes97}. The following notation is in use: if $e_1,\dots,e_k$ are distinct hyperedges of a hypergraph $H$ and they are simultaneously replaced by hypergraphs $H_1,\dots,H_k$ resp., then the result is denoted $H[e_1/H_1,\dots,e_k/H_k]$.

In a case where a hypergraph $G$ does not have external nodes ($ext_G=\Lambda$) let us call it \emph{zero rank}. If at least one of the hypergraphs $H_1$, $H_2$ is zero rank, then one can define their \emph{disjoint union} $H_1+H_2$ as the hypergraph $\langle V_{H_1}\sqcup V_{H_2}, E_{H_1}\sqcup E_{H_2}, att, lab, ext\rangle$ such that $att|_{H_i} = att_{H_i}$, $lab|_{H_i} = lab_{H_i}$ ($i=1,2$), and $ext = ext_{H_i}$ if $H_i$ is not zero rank and $ext=\Lambda$ otherwise; that is, we just put these hypergraphs together without fusing any nodes or hyperedges. The disjoint union of a zero-rank hypergraph $H$ with itself $k$ times is denoted by $k \cdot H$. 

\subsection{DPO Grammars}
Given two hypergraphs $G$ and $H$, a \emph{morphism} $f:G \to H$ is a pair of functions $f_V:V_G \to V_H$, $f_E: E_G \to E_H$ such that $f_V(att_G(e))=att_H(f_E(e))$, $lab_H(f_E(e))=lab_G(e)$ for all $e\in E_G$, and $f_V(ext_G) = ext_H$. 

Let $I,G_1,G_2$ be zero-rank hypergraphs with morphisms $\varphi_i:I \to G_i$, $i=1,2$. Let $\equiv_V$ be the smallest equivalence relation on the disjoint union $V_{G_1} \sqcup V_{G_2}$ that satisfies $\varphi_1(v) \equiv \varphi_2(v)$ for $v \in V_I$; a relation $\equiv_E$ is defined similarly on $E_{G_1} \sqcup E_{G_2}$. $\langle x \rangle$ denotes the equivalence class of $x$ with respect to $\equiv_V$ if $x$ is a node, and with respect to $\equiv_E$ if $x$ is a hyperedge. The \emph{gluing} of $G_1$ and $G_2$ over $I$ denoted as $G_1 +_{\varphi_1,\varphi_2} G_2$ is a hypergraph $G$ such that $V_G=(V_{G_1} \sqcup V_{G_2})/\equiv_V$, $E_G=(E_{G_1} \sqcup E_{G_2})/\equiv_E$; given $\langle e \rangle \in E_G$ with $rk(e) = k$, if $e \in E_{G_i}$ for some $i=1,2$, then $att_G(\langle e \rangle) = \langle att_{G_i}(e)(1)\rangle \dotsc \langle att_{G_i}(e)(k)\rangle$ and $lab_G(\langle e \rangle) = lab_{G_i}(e)$. This is a well-defined notion taken from \cite{Konig18} (where it is defined for graphs rather than for hypergraphs). There, the authors state that the gluing of two graphs is a pushout in the category of graphs. In this paper, we do not work within the categorical approach, so we stick to the set-theoretic definition.

Note that, if $I$ is discrete, then the gluing procedure can be represented as replacement:
\begin{proposition}\label{prop_dpo_to_replacement}
	Let $I=\Dis[k]$ and let $G_i$, $\varphi_i$ be as above. Let $G_1^\prime = \langle V_{G_1},E_{G_1}\sqcup \{e_0\},att_{G_1^\prime},lab_{G_1^\prime}\rangle$ where $att_{G_1^\prime}(e) = att_{G_1}(e)$, $lab_{G_1^\prime}(e) = lab_{G_1}(e)$ for $e \in E_{G_1}$, and $att_{G_1^\prime}(e_0) = \varphi_1(1) \dotsc \varphi_1(k)$ (the label of $e_0$ does not matter). Let $G_2^\prime = \langle V_{G_2},E_{G_2},att_{G_2},lab_{G_2}, \varphi_2(1)\dotsc \varphi_2(k) \rangle$. Then $G_1 +_{\varphi_1,\varphi_2} G_2 = G_1^\prime[e_0/G_2^\prime]$.
\end{proposition}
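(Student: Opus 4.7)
The plan is to verify the equality by unpacking both constructions and checking that they produce identical hypergraphs (up to the isomorphism class in which hypergraphs are considered).

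First I would observe that since $I = \Dis[k]$ contains no hyperedges, the relation $\equiv_E$ on $E_{G_1} \sqcup E_{G_2}$ is just equality, so the hyperedge set of $G_1 +_{\varphi_1,\varphi_2} G_2$ is literally $E_{G_1} \sqcup E_{G_2}$. Meanwhile, $\equiv_V$ is the equivalence relation on $V_{G_1} \sqcup V_{G_2}$ generated by the pairs $(\varphi_1(i), \varphi_2(i))$ for $i \in [k]$; since the $\varphi_j(i)$'s for different $i$ may not be distinct, some nontrivial chains of identifications can occur, but all identifications are forced by exactly this set of pairs.

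Next I would unpack $G_1^\prime[e_0/G_2^\prime]$ according to the three-step definition of replacement given in Section \ref{sec_preliminaries}. Removing $e_0$ from $G_1^\prime$ yields a hypergraph isomorphic to $G_1$ itself (with the same nodes, hyperedges, attachment and label functions). Taking the disjoint union with a fresh isomorphic copy of $G_2^\prime$ gives a hypergraph on node set $V_{G_1} \sqcup V_{G_2}$ and hyperedge set $E_{G_1} \sqcup E_{G_2}$, with the obvious attachment and label functions inherited componentwise. Finally, the fusion step identifies, for each $i \in [k]$, the $i$-th external node of $G_2^\prime$, namely $\varphi_2(i)$, with the $i$-th attachment node of $e_0$, namely $\varphi_1(i)$. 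Since these are the only identifications performed, the resulting node set is exactly $(V_{G_1} \sqcup V_{G_2})/{\equiv_V}$, matching the gluing.

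It then remains to check that the attachment function, labelling and external nodes agree. For a hyperedge $e \in E_{G_i}$, both constructions set its attachment string to the $\equiv_V$-classes of $att_{G_i}(e)$ and its label to $lab_{G_i}(e)$; the external nodes are $\Lambda$ on both sides because $G_1$ (and $G_2$) is zero rank and the replacement inherits $ext$ from the host $G_1^\prime$, which is also zero rank by construction. I expect no substantive obstacle beyond this bookkeeping: the only point worth stating explicitly is the choice of ``fresh'' isomorphic copy in step~(2) of replacement, which is harmless because hypergraphs are considered up to isomorphism, so I would mention this briefly and then conclude.
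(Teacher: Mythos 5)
Your proposal is correct and takes the same route as the paper, which simply states that the proposition ``immediately follows from the definitions of gluing and replacement''; you have just written out the definitional bookkeeping (trivial $\equiv_E$ since $\Dis[k]$ has no hyperedges, the fusion step generating exactly $\equiv_V$, and the agreement of $att$, $lab$, $ext$) that the paper leaves implicit.
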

This proposition immediately follows from the definitions of gluing and replacement.

A \emph{hypergraph grammar rule over a set of labels $C$} is of the form $r = (L \overset{\varphi_L}{\leftarrow} I \overset{\varphi_R}{\rightarrow} R)$ where $L,I,R \in \mathcal{H}(C)$ are zero rank and $\varphi_L$, $\varphi_R$ are morphisms. A hypergraph $G$ \emph{is transformed into $H$ via $r$} if there is a zero-rank hypergraph $C$ and a morphism $\psi: I \to C$ such that $G \cong C+_{\psi,\varphi_L} L$, $H \cong C+_{\psi,\varphi_R} R$ \cite{Konig18} ($\cong$ means that hypergraphs are isomorphic). Categorically, this can be expressed by a double pushout diagram:
\begin{center}
	\begin{tikzcd}
		L \arrow[d, "m"] & I \arrow[l, "\varphi_L"] \arrow[r, "\varphi_R"] \arrow[d, "\psi"] & R \arrow[d, "n"] \\
		G                & C \arrow[r, "\eta_R"] \arrow[l, "\eta_L"]                         & H               
	\end{tikzcd}
\end{center}
This transformation is denoted as $G \underset{r}{\Rightarrow} H$ or simply as $G \Rightarrow H$. 

NB! Hereinafter, we consider only hypergraph rules with $I$ being discrete. This does not substantially restrict the formalism.
\begin{example}\label{ex_dpo_rule}
	Consider the following DPO rule $\rho$:
	$$
	\rho = \left( \vcenter{\hbox{{\tikz[baseline=.1ex]{
					\node[node, label=above:{\scriptsize $\{1\}$}] (N) {};
					\node[node, below left=7mm and 7mm of N, label=below:{\scriptsize $\{ 2 \}$}] (N1) {};
					\node[node, below right=7mm and 7mm of N, label=below:{\scriptsize $\{ 3 \}$}] (N2) {};
					\draw[->,black] (N) -- node[left] {$l$} (N1);
					\draw[->,black] (N) -- node[right] {$r$} (N2);
	}}}}
	\;\leftarrow\; \Dis[3] \;\rightarrow\; 
	\vcenter{\hbox{{\tikz[baseline=.1ex]{
					\node[node, label=above:{\scriptsize $\{ 1 \}$}] (N) {};
					\node[hyperedge,below=4.74mm of N] (E1) {$t$};
					\node[node, below left=7mm and 7mm of N, label=below:{\scriptsize $\{ 2 \}$}] (N1) {};
					\node[node, below right=7mm and 7mm of N, label=below:{\scriptsize $\{ 3 \}$}] (N2) {};
					\node[hyperedge, right=5mm of N] (F) {$f$};
					\draw[-,black] (N) -- node[left] {\scriptsize 1} (E1);
					\draw[-,black] (N1) -- node[below] {\scriptsize 2} (E1);
					\draw[-,black] (N2) -- node[above] {\scriptsize 3} (E1);
					\draw[-,black] (N) -- node[above] {\scriptsize 1} (F);
	}}}} \right)
	$$
	Note that both the leftmost and the rightmost hypergraphs are zero rank (no external nodes in round brackets); numbers in curly brackets represent images of nodes of the interface hypergraph $\Dis[3]$ (i.e. $\varphi_L(1)$ is the node with the superscript $\{ 1 \}$ in the leftmost hypergraph and so on).
\end{example}

The definition of a hypergraph grammar rule and Proposition \ref{prop_dpo_to_replacement} imply that the rule application of $G \underset{r}{\Rightarrow} H$ for $r = (L \overset{\varphi_L}{\leftarrow} \Dis[k] \overset{\varphi_R}{\rightarrow} R)$ consists of an inverse replacement $G = C^\prime[e_0/L^\prime] \Leftarrow C^\prime$ and of a replacement $C^\prime \Rightarrow C^\prime[e_0/R^\prime] = H$. Here $C$ is as in the definition of a hypergraph rule application, and $C^\prime$, $e_0$, $L^\prime$ and $R^\prime$ are defined in the same way as $G_1^\prime$, $e_0$ and $G_2^\prime$ in Proposition \ref{prop_dpo_to_replacement}; in particular, $L^\prime = \langle V_{L},E_{L},att_{L},lab_{L}, \varphi_L(1)\dotsc \varphi_L(k) \rangle$, $R^\prime = \langle V_{R},E_{R},att_{R},lab_{R}, \varphi_R(1)\dotsc \varphi_R(k) \rangle$. 

An application of a hypergraph rule can be extended to cases where $G$ and $H$ are not zero rank. Indeed, we can say that $G$ is transformed into $H$ via $r$ if there is a hypergraph $C^\prime$ with a distinguished hyperedge (call it $e_0$) such that $C^\prime[e_0/L^\prime] \cong G$ and $C^\prime[e_0/R^\prime] \cong H$. Hereinafter we will use this extended definition. Clearly, if $G \Rightarrow H$, then $rk(G)=rk(H)$.

A \emph{DPO hypergraph grammar} $HGr$ is of the form $\langle N, \Sigma, P, Z\rangle$ where $N$, $\Sigma$ are disjoint finite alphabets of \emph{nonterminal} and \emph{terminal} labels resp., $P$ is a finite set of hypergraph grammar rules over $N \cup \Sigma$ of the form $L \overset{\varphi_L}{\leftarrow} \Dis[k] \overset{\varphi_R}{\rightarrow} R$, and $Z$ is a \emph{start hypergraph}. The language $L(HGr)$ generated by $HGr$ is the set of all hypergraphs $H \in \mathcal{H}(\Sigma)$ such that $Z \Rightarrow^\ast H$. Note that we can assume without loss of generality that $Z = S^\bullet$ for $S \in N$.

\subsection{Hypergraph Lambek Calculus and Hypergraph Lambek Grammars}\label{ssec_HL_HLG}
Now let us define the hypergraph Lambek calculus. Details concerning its motivation can be found in \cite{Pshenitsyn21_2, Pshenitsyn22}; we also provide a motivation relating the hypergraph Lambek calculus with linear logic in the next subsection. 

Let us fix a set $\mathit{Pr}$ of \emph{primitive types} along with a function $rk:\mathit{Pr} \to \mathbb{N}$; we require that for each $k \in\mathbb{N}$ there are countably many $p \in \mathit{Pr}$ such that $rk(p) = k$. Besides, we fix a countable set of labels $\$_n,n\in\mathbb{N}$ and set $rk(\$_n)=n$; let us agree that these labels do not belong to any other set considered in the definition of the calculus. Then the set of \emph{types} $\mathit{Tp}$ is defined inductively as follows:
\begin{enumerate}
	\item All primitive types are types.
	\item Let $N \in \mathit{Tp}$ be a type, and let $D$ be a hypergraph such that labels of all of its hyperedges, except for one, are from $\mathit{Tp}$, and one of them equals $\$_d$ for some $d$; let also $rk(N)=rk(D)$. Then $N\div D$ is also a type such that $rk(N\div D)\eqdef d$. The hyperedge of $D$ labeled by $\$_d$ is denoted by $e^\$_D$.
	\item If $M$ is a hypergraph labeled by types from $\mathit{Tp}$, then $\times(M)$ is also a type, and $rk(\times(M))\eqdef rk(M)$.
\end{enumerate}
Example \ref{ex_hl_type} contains an exemplar of a type (note that $\times$ binds stronger then $\div$). A \emph{sequent} is a structure of the form $H\to A$ where $H$ is a hypergraph labeled by types (called the \emph{antecedent} of the sequent), and $A$ is a type (called the \emph{succedent}) such that $rk(H)=rk(A)$.

The hypergraph Lambek calculus $\mathrm{HL}$ derives hypergraph sequents. The only axiom of $\HL$ is of the form $A^\bullet\to A$ where $A\in \mathit{Tp}$. There are four inference rules of $\HL$:
\begin{center}
	\begin{tabular}{cc}
		$
		\infer[(\div\to)]{H\left[e/D[e^\$_D/ (N\div D)^\bullet,d_1/H_1,\dotsc,d_k/H_k]\right]\to A}{H[e/N^\bullet]\to A & H_1\to lab_D(d_1) &\dots & H_k\to lab_D(d_k)}
		$
		&
		$
		\infer[(\to\div)]{F\to N\div D}{D[e^\$_D/F]\to N}
		$
		\\
		&
		\\
		$
		\infer[(\to\times)]{M[m_1/H_1,\dotsc,m_l/H_l]\to\times(M)}{H_1\to lab_M(m_1) & \dots & H_l\to lab_M(m_l)}
		$
		&
		$
		\infer[(\times\to)]{H[e/(\times(M))^\bullet]\to A}{H[e/M]\to A}
		$
		\\
	\end{tabular}
\end{center}
Here $N\div D$, $\times(M)$ are types; $e \in E_H$; $E_D=\{e^\$_D,d_1,\dots,d_k\}$, $E_M = \{m_1,\dotsc, m_l\}$. In each rule presented above, the sequents above the line are called \emph{premises}, and the sequent below the line is called the \emph{conclusion}. A hypergraph sequent $H\to A$ is said to be \emph{derivable in $\mathrm{HL}$} (denoted by $\mathrm{HL}\vdash H\to A$) if it can be obtained from axioms of $\HL$ by applications of rules of $\mathrm{HL}$. A corresponding sequence of rule applications is called a \emph{derivation}. An example of a derivation is given in Example \ref{ex_hl_derivation}.

Inference rules of the hypergraph Lambek calculus are defined as transformations operating on hypergraph sequents. All the rules are defined through replacement; besides, after an application of each rule a new type appears either in the antecedent or in the succedent of a sequent. Let us take a closer look at two particular rules, namely, at $(\div\to)$ and $(\times \to)$. The rule $(\div\to)$ is organized as follows: given a sequent $H[e/N^\bullet] \to A$ (note that $H[e/N^\bullet]$ is structurally the same hypergraph $H$, the replacement only changes the label of $e$) and sequents $H_i \to lab_D(d_i)$ for $i=1,\dotsc, k$, we replace $e$ in $H$ by $D$, then relabel the $\$_d$-labeled hyperedge by the type $(N \div D)$, and then replace each $d_i$ by the corresponding antecedent $H_i$ ($i=1,\dotsc,k$). Hence, this rule essentially consists of several replacements. In contrast, the rule $(\times\to)$ performs a transformation inverse to replacement: if one has a hypergraph $G[e/M]$ in the antecedent, then he/she can ``compress'' its subhypergraph $M$ into a single hyperedge $e$ labeled by the type $\times(M)$.

\begin{remark}\label{remark_invertibility}
	The rules $(\times\to)$ and $(\to\div)$ are invertible in $\HL$. This means that:
	\begin{enumerate}
		\item If a sequent of the form $H[e/(\times(M))^\bullet]\to A$ is derivable in $\HL$, then so is $H[e/M] \to A$.
		\item If a sequent of the form $F \to N \div D$ is derivable in $\HL$, then so is $D[e^\$_D/F]\to N$.
	\end{enumerate}
	Here all the notation is the same as for the rules $(\to \div)$ and $(\times\to)$. This can be proved using the cut elimination theorem; the theorem and its proof can be found in \cite[Proposition 1]{Pshenitsyn22}.
\end{remark}

An \emph{$\HL$-grammar} is a tuple $HLGr=\langle \Sigma, S, \triangleright\rangle$ where $\Sigma$ is an alphabet along with a rank function $rk:\Sigma \to \mathbb{N}$; $S\in \mathit{Tp}$ is a distinguished type; $\triangleright\subseteq\Sigma\times \mathit{Tp}$ is a finite binary relation such that $a\triangleright T$ implies $rk(a)=rk(T)$. \emph{The language $L(HLGr)$ generated by an $\HL$-grammar} $HLGr=\langle \Sigma, S, \triangleright\rangle$ is the set of all hypergraphs $G\in\mathcal{H}(\Sigma)$ for which a function $f_G:E_G\to \mathit{Tp}$ exists such that: 
\begin{enumerate}
	\item $lab_G(e)\triangleright f_G(e)$ whenever $e\in E_G$;
	\item $\mathrm{HL}\vdash f_G(G)\to S$ (recall that $f_G(G)$ is a relabeling of $G$ by means of $f_G$).
\end{enumerate}

\subsection{Some Insights Into HL}\label{ssec_insights}
In this subsection, we would like to outline the relationship between the Lambek calculus, $\HL$ and linear logic in order to provide some intuition for the hypergraph Lambek calculus. This can be done using ideas from \cite{MootP01}. In that paper, the authors introduce \emph{first-order multiplicative intuitionistic linear logic} $\MILLFO$ and show that the Lambek calculus can be embedded in it (as well as its variants like the Lambek calculus with permutation, the nonassociative Lambek calculus etc.). This calculus turns out to be closely related to $\HL$ as we are going to show. 

The content of this subsection is not necessary to understand the main technical results of the paper, so if the reader would like to skip it, we advise him/her to proceed with Section \ref{sec_hmel}.

The Lambek calculus is a propositional logic, which is usually presented in the Gentzen style, i.e. as a sequent calculus. The use of sequent calculi is convenient, because it is easier to check derivability using them than using a Hilbert-style axiom-centered calculus. Besides, if we consider e.g. the classical propositional calculus in the Gentzen style, then we have a nice division of its rules into two parts: there are purely logical rules describing behaviour of logical operations like conjunction, disjunction, impication, or negation, and there are structural rules like weakening, contraction, or permutation. For example, in classical logic the rule of weakening is of the following form:
$$
\infer[]{
	\Gamma, B \to A
}{
	\Gamma \to A
} 
$$
It is understood as follows: if one can prove $A$ from assumptions $\Gamma$, then he/she can also prove it from $\Gamma,B$. Clearly, such a rule is not concerned with a particular logical operation but it tells us something about the sequent behaviour itself. By dropping some or all the structural rules, one obtains \textit{substructural logics}, which turn out to be useful in computer science, linguistics and other branches of science. The Lambek calculus is one of them; it includes neither weakening nor contraction nor even permutation. Consequently, the order of assumptions in the left-hand side of a sequent matters. This gives rise to a non-commutative version of conjunction called the product and to two its residues called the left division and the right division. Formally, types (formulas) of $\LC$ are built from primitive ones $\mathit{Pr}$ using these three operations denoted as $\cdot$, $\backslash$, and $/$ resp. A sequent is a structure of the form $A_1,\dotsc,A_n\to A$ where $n > 0$ and $A_i$, $A$ are types. The only axiom of $\LC$ is of the form $A\to A$ for each type $A$. There are six rules:
$$
\infer[(\cdot\to)]{\Gamma, A \cdot B, \Delta \to C}{\Gamma, A, B, \Delta \to C}
\qquad
\infer[(\to\cdot)]{\Pi, \Psi \to A \cdot B}{\Pi \to A & \Psi \to B}
$$
$$
\infer[(\backslash\to)]{\Gamma, \Pi, A \backslash B, \Delta \to C}{\Pi \to A & \Gamma, B, \Delta \to C}
\quad\;
\infer[(\to\backslash)]{\Pi \to A \backslash B}{A, \Pi \to B}
\quad\;
\infer[(/\to)]{\Gamma, B / A, \Pi, \Delta \to C}{\Pi \to A & \Gamma, B, \Delta \to C}
\quad\;
\infer[(\to/)]{\Pi \to B / A}{\Pi, A \to B}
$$
Here capital Latin letters denote types, and capital Greek letters denote sequences of types (and $\Pi$ is nonempty).
\begin{example}\label{ex_sequent_LC}
	The sequent $p, q/r, r/s \to (p\cdot q)/s$ is derivable in $\LC$ (where $p$, $q$, $r$, and $s$ are primitive types):
	$$
	\infer[(\to /)]{
		p, q/r, r/s \to (p\cdot q)/s
	}
	{
		\infer[(/ \to)]{
			p, q/r, r/s, s \to p\cdot q
		}
		{
			\infer[(\to \cdot)]{
				p, q \to p\cdot q
			}
			{
				p \to p & q \to q
			}
			&
			\infer[(/ \to)]{
				r/s, s \to r
			}
			{
				r \to r & s \to s
			}
		}
	}
	$$
\end{example} 

There are many modifications of the Lambek calculus designed for linguistic and logical purposes: the multimodal Lambek calculus, the displacement calculus, the nonassociative Lambek calculus etc. Besides, there are calculi generalizing and unifying these modifications, i.e. they represent the latter in a uniform setting. One of such generalizations is the hypergraph Lambek calculus. Another one is first-order intuitionistic linear logic \cite{MootP01}, or, more precisely, its multiplicative fragment $\MILLFO$. Its language includes individual variables $x_0,x_1,\dotsc$, individual constants $c_0,c_1,\dotsc$, functional symbols of different arities, the binary connectives $\otimes$ and $\multimap$, and the quantifiers $\forall$ and $\exists$. A \textit{term} is an application of a functional symbol to a list of variables and constants. \textit{Formulas} of $\MILLFO$ are built from terms using $\otimes$, $\multimap$, and the quantifiers as in any first-order logic. For example, $\forall x.\exists y. (s(x,y) \otimes t(x,y,c_0))$ is a formula. A \textit{sequent} is of the form $\Gamma \to A$ where $\Gamma$ is a \textit{multiset} of formulas (i.e. it is not ordered) and $A$ is a formula. The only axiom scheme of $\MILLFO$ is $A \to A$ for all formulas $A$. The inference rules are as follows:
$$
\infer[(\otimes \to)]{
	\Gamma, A\otimes B \to C
}{
	\Gamma, A, B \to C
}
\quad
\infer[(\to \otimes)]{
	\Gamma, \Delta \to A\otimes B 
}{
	\Gamma \to A
	&
	\Delta \to B
}
\quad
\infer[(\multimap\to)]{
	\Gamma, \Delta, A\multimap B \to C
}{
	\Delta \to A
	&
	\Gamma, B \to C
}
\quad
\infer[(\to \multimap)]{
	\Gamma \to A\multimap B
}{
	\Gamma, A \to B
}
$$
$$
\infer[(\exists \to)]{
	\Gamma, \exists x.A \to C
}{
	\Gamma, A \to C
}
\quad
\infer[(\to \exists)]{
	\Gamma \to \exists x.A
}{
	\Gamma \to A[x\eqdef e]
}
\quad
\infer[(\forall \to)]{
	\Gamma, \forall x.A \to C
}{
	\Gamma, A[x:=e] \to C
}
\quad
\infer[(\to \forall)]{
	\Gamma \to \forall x.A
}{
	\Gamma \to A
}
$$
Here $e$ is an arbitrary constant or variable of our choice (either present in $\Gamma$ or in $A$ or a fresh one). It is required that in the rules $(\exists \to)$ and $(\to \forall)$ the variable $x$ does not occur freely in $\Gamma$ or $C$.

In \cite{MootP01}, the translation of $\LC$ into $\MILLFO$ is discovered such that each type of $\LC$ is transformed into a formula of $\MILLFO$ with two free variables:
\begin{enumerate}
	\item $||p||^{x,y} = p(x,y)$;
	\item $||A/B||^{x,y} = \forall z. ||B||^{y,z} \multimap ||A||^{x,z}$;
	\item $||B \backslash A||^{x,y} = \forall z. ||B||^{z,x} \multimap ||A||^{z,y}$;
	\item $||A \cdot B||^{x,y} = \exists z. ||A||^{x,z} \otimes ||B||^{z,y}$.
\end{enumerate}
Finally, a sequent $A_1,\dotsc,A_n \to B$ of $\LC$ is translated into the sequent $||A_1||^{c_0,c_1},\dotsc,||A_n||^{c_{n-1},c_n} \to ||B||^{c_0,c_n}$ where $c_0,c_1,\dotsc,c_n$ are distinct constants. It is proved \cite{MootP01} that a sequent is derivable in $\LC$ if and only if its translation is derivable in $\MILLFO$.

\begin{example}\label{ex_translation_MILL1}
	The types from the sequent presented in Example \ref{ex_sequent_LC} are translated into $\MILLFO$ as follows:
	\begin{enumerate}
		\item $||p||^{x,y} = p(x,y)$;
		\item $||q/r||^{x,y} = \forall z. r(y,z) \multimap q(x,z)$;
		\item $||r/s||^{x,y} = \forall z. s(y,z) \multimap r(x,z)$;
		\item $||(p\cdot q)/s||^{x,y} = \forall z. (s(y,z) \multimap \exists t. p(x,t)\otimes q(t,z))$.
	\end{enumerate}
	The sequent from Example \ref{ex_sequent_LC} is translated as follows:
	\begin{equation}\label{eq_ex_translation_MILL1}
		p(c_0,c_1), \forall z. r(c_2,z) \multimap q(c_1,z), \forall z. s(c_3,z) \multimap r(c_2,z) \;\to\; \forall z. (s(c_3,z) \multimap \exists t. p(c_0,t)\otimes q(t,z))
	\end{equation}
	It is an exercise to verify that the latter sequent is derivable in $\MILLFO$.
\end{example}
An important observation is that, although $\MILLFO$ is a commutative logic, that is, the order of formulas in antecedents of sequents can be freely changed, we can still embed a noncommutative logic like $\LC$ in it. The trick is that we preserve the order of types in $\LC$ by using constants $c_0,\dotsc,c_n$: they fix the linear structure of an original sequent from $\LC$.

In \cite{Pshenitsyn22}, we prove that $\LC$ can be embedded in $\HL$. E.g. the sequent $p,q/r, r/s \to (p\cdot q)/s$ is translated into the following hypergraph sequent:
\begin{equation}\label{eq_seq_SG}
	\mbox{	
		{\tikz[baseline=.1ex]{
				\node[node,label=left:{\scriptsize $(1)$}] (N1) {};
				\node[node,right=20mm of N1] (N2) {};
				\node[node,right=20mm of N2] (N3) {};
				\node[node,right=20mm of N3,label=right:{\scriptsize $(2)$}] (N4) {};
				\draw[->,black] (N1) -- node[above] {$p$} (N2);
				\draw[->,black] (N2) -- node[above] {$tr_{\LC}(q/r)$} (N3);
				\draw[->,black] (N3) -- node[above] {$tr_{\LC}(r/s)$} (N4);
	}}}\to \; tr_{\LC}((p\cdot q)/s)
\end{equation}
Here $tr_{\LC}$ is a translation from $\LC$ into $\HL$, which we are not going to describe in this paper. Comparing the embeddings of $\LC$ to $\MILLFO$ and to $\HL$ we observe a number of correspondences: constants $c_i$ ($i=0,1,2,3$) correspond to the nodes of the graph in the antecedent of (\ref{eq_seq_SG}); the formula $p(c_0,c_1)$ corresponds to a $p$-labeled edge going from the node $c_0$ to the node $c_1$; constants $c_0$, $c_2$ correspond to the external nodes of the graph. We can also compare the rules of $\HL$ and the rules of $\MILLFO$ and observe strong similarity. In fact, these observations can be generalized: we claim that the hypergraph Lambek calculus can be embedded in $\MILLFO$ (see an example below). However, formally introducing this embedding and proving its correctness should be a matter of another paper; moreover, this is not relevant for our further considerations in this work. Recall that the main goal is to provide an intuition concerning how $\HL$ is organized and how to add the exponential modality to $\HL$.
\begin{example}
	The type $\DPO(\rho)$ from Example \ref{ex_hl_type} can be translated into the following formula of $\MILLFO$: $\forall x.\forall y.\forall z. (f(x)\otimes t(x,y,z)) \multimap (l(x,y)\otimes r(x,z))$. The sequent of $\HL$, which appears at the end of the derivation from Example \ref{ex_hl_derivation}, can be translated into the following $\MILLFO$ sequent:
	\begin{eqnarray*}
		f(c_0), t(c_0,c_1,c_2), p(c_1,c_2), \forall x.\forall y.\forall z. (f(x)\otimes t(x,y,z)) \multimap (l(x,y)\otimes r(x,z)) \\ \;\to\; \exists x. \exists y. \exists z. l(x,y) \otimes r(x,z) \otimes p(y,z)
	\end{eqnarray*}
\end{example}
Having this in mind, we would like to proceed with extending $\HL$ by the exponential.

\section{The Hypergraph Multiplicative-Exponential Lambek Calculus}\label{sec_hmel}

In this section, we show a way of extending $\HL$ with the exponential modality $!$ resulting in the \emph{hypergraph multiplicative exponential Lambek calculus} $\HMEL$ (we will explain this $0$ subscript later). After doing this we show how to convert any DPO grammar into an equivalent $\HMEL$-grammar. The conversion procedure is similar to that presented in \cite{Kanazawa99}: there, each unrestricted Chomsky grammar is converted into an equivalent grammar over the multiplicative-exponential Lambek calculus. 

In Section \ref{ssec_insights}, we introduced the rules for the fragment of intuitionistic linear logic. Now, let us look at the full logic $\ILLFO$, which includes exponentials. Formulas of $\ILLFO$ are built using $\otimes$, $\multimap$, $\mathbf{1}$, $\&$, $\oplus$, $!$ and the quantifiers $\forall$ and $\exists$. The rules of $\ILLFO$ include those of $\MILLFO$ and those defining the new connectives. Let us focus on the rules for the $!$ modality:
$$
\infer[(!\to)]{\Gamma,!A \to B}{\Gamma,A \to B}
\qquad
\infer[(\to!)]{!A_1,\dotsc,!A_n \to !B}{!A_1,\dotsc,!A_n \to B}
\qquad
\infer[(\mathrm{w})]{\Gamma,!A \to B}{\Gamma \to B}
\qquad
\infer[(\mathrm{c})]{\Gamma,!A \to B}{\Gamma,!A,!A \to B}
$$
Our aim is to transfer these rules to $\HL$. It would be great to introduce the exponential modality added to $\HL$ unrestrictedly, that is, to be able to consider the type $!A$ for each type $A$. However, this faces certain difficulties if we expect the resulting calculus to be well-behaving, namely, to enjoy the cut elimination theorem. Unfortunately, for now we have not invented a general treatment of exponentials in the hypergraph calculus in such a way that this theorem holds, so this remains an open question. However, if we restrict ourselves and allow one to consider the type $!A$ only if $A$ has the rank $0$ ($rk(A)=0$), then it turns out that such a calculus can be well defined; in particular, it satisfies the cut elimination theorem.

The rules for the exponential modality we suggest to add to $\HL$ are the following ones:
\begin{equation}\label{eq_rules_HMEL}
	\infer[(!\to)]{H+(!A)^\bullet \to B}{H+A^\bullet \to B}
	\qquad
	\infer[(\to!)]{\sum_{i=1}^{n} (!A_i)^\bullet \to !A}{\sum_{i=1}^{n} (!A_i)^\bullet \to A}
	\qquad
	\infer[(\mathrm{w})]{H+(!A)^\bullet \to B}{H \to B}
	\qquad
	\infer[(\mathrm{c})]{H+(!A)^\bullet \to B}{H+2\cdot (!A)^\bullet \to B}
\end{equation}
Here $rk(!A_i)=rk(A_i)=rk(!A)=rk(A)-0$. The summation symbol stands for multiple disjoint union. Clearly, these rules generalize their string counterparts considered earlier for $\ILLFO$. Note that if $rk(X)=0$, then $X^\bullet$ is a zero-rank hyperedge ``floating'' as a separate component of an antecedent hypergraph.

\begin{definition}\label{def_HMEL}
	The \emph{hypergraph multiplicative-exponential Lambek calculus} $\HMEL$ is defined similarly to $\HL$ but in the definition of types $\mathit{Tp}$ we additionally say that, if $A$ is a type such that $rk(A)=0$, then $!A$ is a type as well such that $rk(!A)=0$. We add the rules from (\ref{eq_rules_HMEL}) to the rules of $\HL$.
	\\
	$\HMEL$-grammars are defined in the same way as $\HL$-grammars but based on $\HMEL$.
\end{definition}
The subscript $0$ corresponds to the restriction on putting $!$ only on types with rank $0$. As we mentioned, it allows us to prove the following theorem:

\begin{theorem}\label{th_cut_HMEL}
	If $\HMEL \vdash H \to A$ and $\HMEL \vdash G[e_0/A^\bullet] \to B$, then $\HMEL \vdash G[e_0/H] \to B$.
\end{theorem}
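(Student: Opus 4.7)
The plan is to follow the standard proof strategy for cut elimination in substructural logics, building on the cut elimination already established for $\HL$ in \cite[Proposition 1]{Pshenitsyn22}. I would proceed by a primary induction on the complexity of the cut type $A$, with a secondary induction on the sum of the heights of the two given derivations. All cases in which $A$ and the rules acting on it lie within the $\HL$ fragment go through exactly as in \cite{Pshenitsyn22}, so the new work is to handle the four rules introduced in (\ref{eq_rules_HMEL}): $(!\to)$, $(\to !)$, $(\mathrm{w})$, and $(\mathrm{c})$.

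First I would dispose of the non-principal cases by the usual cut-permutation: whenever the last rule of the left derivation does not introduce the cut type, or the last rule of the right derivation does not act on $e_0$, the cut is pushed past it and the secondary induction hypothesis is invoked on a strictly smaller derivation. The restriction $rk(!A') = 0$ is essential here: it guarantees that any occurrence of $(!A')^\bullet$ is a disconnected zero-rank component of the antecedent, so that no $(\div\to)$, $(\to\div)$, $(\times\to)$, or $(\to\times)$ step interacts structurally with the cut edge, and cut can be commuted freely past such rules.

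For the principal cases with $A = !A'$, the left derivation must end with $(\to !)$, so $H = \sum_{i=1}^{n}(!A_i)^\bullet$ with premise $H \to A'$. If the right derivation ends with $(!\to)$ on $e_0$, its premise is $G[e_0/(A')^\bullet]\to B$; a cut on $A'$, which has strictly smaller complexity, yields $G[e_0/H]\to B$ via the primary induction hypothesis. If it ends with $(\mathrm{w})$ removing $e_0$, the premise is $G'\to B$ with $G[e_0/(!A')^\bullet] = G' + (!A')^\bullet$, and $n$ applications of weakening produce $G' + H = G[e_0/H]\to B$.

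The hard part will be the contraction case, where the premise of $(\mathrm{c})$ contains two copies of $(!A')^\bullet$ and the naive two successive cuts on $!A'$ do not decrease either induction parameter. To overcome this I would strengthen the statement to a \emph{multicut}: whenever $\HMEL \vdash \sum_{i=1}^{n}(!A_i)^\bullet \to !A'$ and $\HMEL \vdash G + k\cdot (!A')^\bullet \to B$ for some $k \geq 0$, then $\HMEL \vdash G + \sum_{i=1}^{n}(!A_i)^\bullet \to B$, and prove this together with the ordinary cut by the same double induction. Under this formulation $(\mathrm{c})$ collapses to a single multicut applied to its premise (with parameter $k+1$), whose right derivation has strictly smaller height; $(\mathrm{w})$ reduces $k$ to $k-1$; and $(!\to)$ reduces the cut type to $A'$ of strictly smaller complexity. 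The theorem as stated is then the $k = 1$ instance of this multicut for exponential cut types and the ordinary cut elimination of \cite{Pshenitsyn22} otherwise.
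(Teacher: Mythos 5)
Your proposal is correct and follows essentially the same route as the paper: the multicut you introduce to tame the contraction case is precisely the paper's mix rule $(\mathrm{mix})$, and your double induction (outer on the size of the cut type, inner on the sum of derivation heights) together with the split into principal and commutative cases matches the paper's argument step for step. The only point the paper makes explicit that you leave implicit is the commutative case where the right derivation ends with $(\to\,!)$, which forces its antecedent to consist of $!$-ed zero-rank handles; since the left derivation then also ends with $(\to\,!)$, the antecedent $H$ has the required form and the cut commutes without difficulty.
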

In other words, this theorem states that we can add the cut rule introduced below to the list of rules without enlarging the set of derivable sequents:
$$
\infer[(\mathrm{cut})]{G[e_0/H] \to B}{H \to A & G[e_0/A^\bullet] \to B}
$$
\begin{proof}
	We prove that both the cut rule and the \textit{mix rule}, which we introduce below, can be added to $\HMEL$ without affecting the set of derivable sequents:
	$$
	\infer[(\mathrm{mix})]{G^\prime+H \to B}{H \to !C & G^\prime+(!C)\cdot n \to B}
	$$
	Here $H$ is zero rank. Such a rule is commonly introduced to prove the cut elimination theorem for variants of linear logic including weakening (w) and contraction (c). In fact, the whole proof resembles that from \cite{LincolnMSS92}, and it is typical for substructural logics with exponentials like $\HMEL$. Let us denote $!C$ by $A$ and both $G^\prime+(!C)\cdot n$ and $G[e_0/A^\bullet]$ by $F$ for the sake of uniformity.
	
	The proof is done by nested induction: the outer one is on the size of $A$ (counted as the total number of primitive types and of symbols $\times$, $\div$ and $!$ in the construction of $A$), and the inner one is on the sum of lengths of the derivations of $H \to A$ and $F \to B$. We need to consider several cases depending on the last rule applied in the derivation of $H \to A$ or in that of $F \to B$. An important notion is that of \textit{the major type}: given a derivation of some sequent, the major type is the type that appears in the sequent after the last step of the derivation. E.g., in Example \ref{ex_sequent_LC}, the major type is $(p \cdot q)/s$, and in Example \ref{ex_hl_derivation}, it is $\DPO(\rho)$.
	
	The first group of cases is where $H \to A$ or $F \to B$ is an axiom; then the statement of the theorem becomes trivial. 
	
	The second group is where $A$ from the succedent of $H \to A$ is not major in the derivation of $H \to A$. Then the last rule is concerned with $H$ somehow, and it does not affect $A$. Thus we can repeat the same step but when $H$ replaces $e$ within $G$, and after that apply the induction hypothesis. More formally, let the last step of the derivation of $H \to A$ be as follows:
	$$
	\infer[]{
		H \to A
	}{
		H^\prime \to A
		&
		\mathcal{P}
	} 
	$$
	Here $\mathcal{P}$ are some additional premises. Then we apply the induction hypothesis to $H^\prime \to A$ and $F \to B$ and thus conclude that $G[e_0/H^\prime] \to B$ ($G^\prime+H^\prime \to B$ for the mix rule) is derivable. Finally, we perform the following step:
	$$
	\infer[]{
		G[e_0/H] \to B
	}{
		G[e_0/H^\prime] \to B
		&
		\mathcal{P}
	}
	$$
	This completes the proof for this group of cases. From now on, we assume that $A$ is major in $H \to A$.
	
	The third group of cases is where the distinguished occurrence of $A$ (or all the $n$ distinguished occurrences of $A=!C$ if we consider $(\mathrm{mix})$) is not major in the derivation of $F \to B$ and the last rule in its derivation is not $(\to !)$. Then similar reasonings to those for the second group can be applied.
	
	The fourth group is where $A$ is major in both $H \to A$ and $F \to B$ but $A$ is not of the form $!C$ (it is possible only for the cut rule). Then the reasonings are the same as for $\HL$, see \cite[Appendix A]{Pshenitsyn22}.
	
	In the remaining cases $A=!C$ (thus the cut rule becomes an instance of the mix rule) is major in both $H \to A$ and $F \to B$; we also need to cover the case where the last rule in the derivation of $F \to B$ is $(\to !)$. In all these cases, the last rule in the derivation of $H \to A$ is $(\to !)$, consequently, the sequent $H \to C$ is derivable and $H=\sum_{i=1}^n (!A_i)^\bullet$ for some $A_i$ such that $rk(A_i)=0$. The following cases are possible depending on the last rule applied in the derivation of $F \to B$:
	
	\textit{Case I.} The last rule application is $(!\to)$. Then it must be of the form:
	$$
	\infer[(! \to)]{G^\prime+(!C)\cdot n \to B}{G^\prime+(!C)\cdot (n-1) + C^\bullet \to B}
	$$
	Using the induction hypothesis, we apply the mix rule to $H \to A$ and $G^\prime+A\cdot (n-1) + C^\bullet \to B$ concluding that $G^\prime+H+C^\bullet \to B$ is derivable. Then we again apply the induction hypothesis to the sequents $H \to C$ and $G^\prime+H+C^\bullet \to B$ (note that the size of $C$ is less than that of $A$ so the first induction parameter decreases) coming up with the sequent $G^\prime + H + H \to B$. Finally, one applies the contraction rule $(\mathrm{c})$ $n$ times thus contracting two $H$'s into a single $H$.
	
	\textit{Case II.} The last rule application is $(\mathrm{w})$. Then it must be of the form:
	$$
	\infer[(! \to)]{G^\prime+(!C)\cdot n \to B}{G^\prime+(!C)\cdot (n-1) \to B}
	$$
	It remains to apply the induction hypothesis to $H \to A$ and $G^\prime+(!C)\cdot (n-1) \to B$.
	
	\textit{Case III.} The last rule application is $(\mathrm{c})$. The reasonings are similar to the previous case.
	
	\textit{Case IV.} The last rule application is $(\to !)$. In what follows, $G^\prime = \sum_{i=1}^m (!D_i)^\bullet$ for some types $D_i$, $B=!D$ and $F \to D$ is derivable. Let us apply the induction hypothesis to $H \to A$ and $F \to D$ thus concluding that $\sum_{i=1}^m (!B_i)^\bullet+\sum_{i=1}^n (!A_i)^\bullet \to D$ is derivable. Finally, we apply the rule $(\to !)$, which results in $G^\prime + H \to B$ as desired.
\end{proof}
As a consequence of the cut elimination theorem, we notice that Remark \ref{remark_invertibility} holds for $\HMEL$ as well.

Now our aim is to prove that $\HMEL$-grammars based on the new calculus are at least as expressive as $\DPO$ grammars. This can be done similarly to the proof of the fact that grammars over the multiplicative-exponential Lambek calculus are at least as expressive as unrestricted Chomsky grammars. Recall that a rule in a Chomsky grammar is of the form $\alpha \Rightarrow \beta$ where $\alpha,\beta$ are two arbitrary strings of nonterminal and terminal symbols. The proof from \cite{Kanazawa99} suggests converting each rule of the form $A_1\dotsc A_n \Rightarrow B_1 \dotsc B_m$ (where $A_i,B_j$ are nonterminal symbols) into the following type: 
$$[\tau(A_1\dotsc A_n \Rightarrow B_1 \dotsc B_m)] \eqdef (A_1\cdot \dotsc \cdot A_n)/(B_1\cdot \dotsc \cdot B_m).$$
Makoto Kanazawa proves in \cite{Kanazawa99} that, given a finite set $R = \{r_1,\dotsc,r_k\}$ of rules of the above form, a string $w$ of nonterminal symbols is derivable from the nonterminal symbol $S$ using rules from $R$ if and only if $\mathrm{MELC} \vdash ![\tau(r_1)],\dotsc,![\tau(r_k)], w \to S$ where all the nonterminal symbols are considered as primitive types and where all symbols in $w$ are separated by commas. E.g. the rule $S \Rightarrow SSA$ allows one to produce $SSASA$ from $S$, hence $\mathrm{MELC} \vdash !(S/(S\cdot S \cdot A)), S, S, A, S, A \to S$. $\mathrm{MELC}$ stands for the multiplicative-exponential Lambek calculus.

The same idea can be implemented for $\DPO$ rules. Firstly we need to slightly enhance DPO hypergraph grammars.
\begin{construction}\label{construction_normalized}
	Given a DPO grammar $\langle N, \Sigma, P, S^\bullet\rangle$, we convert it into a grammar $\langle N^\prime, \Sigma, P^\prime, S^\bullet\rangle$, which we call \emph{normalized}, as follows. For each $a \in \Sigma$ we introduce a new nonterminal label $T_a$ with $rk(T_a)=rk(a)$; let $N^\prime = N \sqcup \{T_a \mid a \in \Sigma\}$. Then for each $r = (L \overset{\varphi_L}{\leftarrow} \Dis[k] \overset{\varphi_R}{\rightarrow} R) \in P$ we replace each terminal label $a$ in $L$, $R$ by $T_a$. Let us call such new rules \emph{nonterminal} and denote the set of nonterminal rules as $P_N$. Finally, we add rules that allow one to replace $T_a$ by $a$, i.e., rules of the form $(T_a^\circ \overset{\varphi_L}{\leftarrow} \Dis[k] \overset{\varphi_R}{\rightarrow} a^\circ)$ where $rk(a)=k$, $\varphi_L(i)=\varphi_R(i) = i$ for $i=1,\dotsc,k$ (here we use the notation of nodes as in the definitions of $\Dis[k]$ and $S^\circ$). These rules are called \emph{terminal} and are denoted as $P_T$. Finally, $P^\prime = P_N \cup P_T$.
	
	Clearly, the normalized grammar generates the same language as the original one.
	Hereinafter we consider only normalized grammars.
\end{construction}

Nonterminal rules can be converted into corresponding types of $\HL$ as follows:
\begin{construction}\label{construction_dpo_rule}
	Let us consider nonterminal labels of normalized grammars as primitive types (with the same rank function). If $r = (L \overset{\varphi_L}{\leftarrow} \Dis[k] \overset{\varphi_R}{\rightarrow} R)$ is a nonterminal rule, then $\DPO(r) \eqdef \times\left(\widehat{L}\right) \div \left(\widehat{R} + \$_0^\bullet\right)$ where $\widehat{L} = \langle V_L,E_L, att_L, lab_L, \varphi_L(1)\dotsc \varphi_L(k)\rangle$; $\widehat{R} = \langle V_R,E_R, att_R, lab_R, \varphi_R(1)\dotsc \varphi_R(k)\rangle$.
	\\
	Note that $\$_0^\bullet$ is a separate hyperedge of rank $0$ and that $rk(\DPO(r))=0$.
\end{construction}

\begin{example}\label{ex_hl_type}
	The nonterminal rule $\rho$ from Example \ref{ex_dpo_rule} is converted into the following type $\DPO(\rho)$:
	$$
	\DPO(\rho) = \times\left( \vcenter{\hbox{{\tikz[baseline=.1ex]{
					\node[node, label=below:{\scriptsize $(1)$}] (N) {};
					\node[node, below left=4mm and 10mm of N, label=below:{\scriptsize $(2)$}] (N1) {};
					\node[node, below right=4mm and 10mm of N, label=below:{\scriptsize $(3)$}] (N2) {};
					\draw[->,black] (N) -- node[above left] {$l$} (N1);
					\draw[->,black] (N) -- node[above right] {$r$} (N2);
	}}}} \right) \div \left( \vcenter{\hbox{{\tikz[baseline=.1ex]{
					\node[node, label=left:{\scriptsize $(1)$}] (N) {};
					\node[hyperedge,below=4.74mm of N] (E1) {$t$};
					\node[node, below left=7mm and 7mm of N, label=left:{\scriptsize $(2)$}] (N1) {};
					\node[node, below right=7mm and 7mm of N, label=right:{\scriptsize $(3)$}] (N2) {};
					\node[hyperedge, right=5mm of N] (F) {$f$};
					\draw[-,black] (N) -- node[right] {\scriptsize 1} (E1);
					\draw[-,black] (N1) -- node[below] {\scriptsize 2} (E1);
					\draw[-,black] (N2) -- node[above] {\scriptsize 3} (E1);
					\draw[-,black] (N) -- node[above] {\scriptsize 1} (F);
	}}}}\quad\vcenter{\hbox{{\tikz[baseline=.1ex]{\node[hyperedge] {$\$_0$};}}}} \right)
	$$
\end{example}

The main connection between $r$ and $\DPO(r)$ is shown in 
\begin{lemma}\label{lemma_dpo}
	Let $Y,Y^\prime$ be two hypergraphs and let $Y\underset{r}{\Rightarrow}Y^\prime$ for $r\in P_N$. If $\mathrm{HL} \vdash Y \to A$ for any type $A$, then $\mathrm{HL} \vdash Y^\prime+\DPO(r)^\bullet \to A$ as well.
\end{lemma}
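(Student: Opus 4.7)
The plan is to exhibit a concrete two-step derivation that turns $Y\to A$ into $Y'+\mathrm{DPO}(r)^\bullet\to A$, using exactly one application of $(\times\to)$ followed by one application of $(\div\to)$. Set $N\eqdef\times(\widehat{L})$ and $D\eqdef\widehat{R}+\$_0^\bullet$, so that $\mathrm{DPO}(r)=N\div D$. By the analysis following Proposition \ref{prop_dpo_to_replacement} (extended to hypergraphs with external nodes), the step $Y\underset{r}{\Rightarrow}Y'$ gives us a context hypergraph $C'$ with a distinguished hyperedge $e_0$ of rank $k$ satisfying $Y\cong C'[e_0/\widehat{L}]$ and $Y'\cong C'[e_0/\widehat{R}]$.

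First, I would apply the invertible rule $(\times\to)$ to the given derivation of $Y\to A$, taking the subhypergraph of interest to be $\widehat{L}$ located in place of $e_0$:
\begin{equation*}
\infer[(\times\to)]{C'[e_0/N^\bullet]\to A}{C'[e_0/\widehat{L}]\to A}
\end{equation*}
Next, I would apply $(\div\to)$ with the choice $H\eqdef C'$, $e\eqdef e_0$, and $D\eqdef\widehat{R}+\$_0^\bullet$ (whose $\$_0$-labeled hyperedge plays the role of $e^\$_D$). For each remaining hyperedge $d_i$ of $D$, i.e.\ for each hyperedge of $\widehat{R}$, the label $lab_D(d_i)$ is a nonterminal of the normalized grammar, and hence a primitive type by the convention introduced in Construction \ref{construction_dpo_rule}; the needed premise $H_i\to lab_D(d_i)$ is therefore supplied by the axiom $(lab_D(d_i))^\bullet\to lab_D(d_i)$, with $H_i\eqdef(lab_D(d_i))^\bullet$. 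The conclusion of $(\div\to)$ is then $C'\bigl[e_0\,/\,D[e^\$_D/(N\div D)^\bullet,\,d_1/H_1,\dots,d_m/H_m]\bigr]\to A$.

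It remains to unwind the right-hand side and identify it with $Y'+\mathrm{DPO}(r)^\bullet$. Replacing each $d_i$ by the handle $(lab_D(d_i))^\bullet$ reproduces $\widehat{R}$ up to isomorphism (the attachment is preserved by the handle), while replacing $e^\$_D=\$_0^\bullet$ by $(N\div D)^\bullet=\mathrm{DPO}(r)^\bullet$ simply relabels that rank-$0$ hyperedge. So $D[e^\$_D/(N\div D)^\bullet,\,d_1/H_1,\dots]=\widehat{R}+\mathrm{DPO}(r)^\bullet$. Since $\mathrm{DPO}(r)$ has rank $0$, the hyperedge $\mathrm{DPO}(r)^\bullet$ is a separate zero-rank component, and replacement of $e_0$ by such a disjoint union factors as $C'[e_0/\widehat{R}]+\mathrm{DPO}(r)^\bullet\cong Y'+\mathrm{DPO}(r)^\bullet$, which is exactly what we wanted.

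The only delicate point I anticipate is the bookkeeping around the $\$_0^\bullet$ hyperedge and the commutation between replacement and disjoint union with a zero-rank component; this is where one has to invoke the definition of $+$ for zero-rank hypergraphs and the fact that replacement commutes with attaching disjoint zero-rank pieces. Everything else is a direct match between the DPO decomposition $C'[e_0/\widehat{L}]\mapsto C'[e_0/\widehat{R}]$ and the combined action of $(\times\to)$ and $(\div\to)$, which was precisely the design intent behind Construction \ref{construction_dpo_rule}.
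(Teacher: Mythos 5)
Your proposal is correct and follows exactly the paper's own argument: the paper proves Lemma~\ref{lemma_dpo} by ``straightforwardly applying $(\times\to)$ and then $(\div\to)$ to the sequent $Y \to A$,'' with Example~\ref{ex_hl_derivation} playing the role of the unwinding you spell out. Your version merely makes explicit the bookkeeping (the choice of $H_i=(lab_D(d_i))^\bullet$ as axioms and the factoring of the zero-rank component $\DPO(r)^\bullet$ out of the replacement) that the paper leaves to the reader.
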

\begin{lemma}\label{lemma_dpo_many}
	Let $Y\Rightarrow^\ast Y^\prime$ in a normalized grammar $\langle N, \Sigma, P, S^\bullet\rangle$ by means of nonterminal rules where $Y,Y^\prime$ are hypergraphs. If $\mathrm{HL} \vdash Y \to A$ for any type $A$, then $\mathrm{HL} \vdash Y^\prime+\sum\limits_{r \in P_N} k_{r}\cdot \DPO(r)^\bullet \to A$ for some $k_r \in \mathbb{N}$.
\end{lemma}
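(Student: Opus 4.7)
The plan is a straightforward induction on the length $n$ of the derivation $Y = Y_0 \Rightarrow Y_1 \Rightarrow \dotsb \Rightarrow Y_n = Y^\prime$, with Lemma \ref{lemma_dpo} doing all the work at each inductive step.

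For the base case $n = 0$, we have $Y^\prime = Y$, so we simply set $k_r \eqdef 0$ for every $r \in P_N$. Then $\sum_{r \in P_N} k_r \cdot \DPO(r)^\bullet$ is the empty hypergraph, and the desired conclusion $\HL \vdash Y^\prime + \sum_{r \in P_N} k_r \cdot \DPO(r)^\bullet \to A$ reduces to the assumption $\HL \vdash Y \to A$.

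For the inductive step, suppose the statement holds for derivations of length $n-1$, and consider a derivation $Y \Rightarrow^\ast Y_{n-1} \underset{r_0}{\Rightarrow} Y^\prime$ with $r_0 \in P_N$. By the induction hypothesis applied to $Y \Rightarrow^\ast Y_{n-1}$, there exist nonnegative integers $k_r$ with
\[
\HL \vdash Y_{n-1} + \sum_{r \in P_N} k_r \cdot \DPO(r)^\bullet \to A.
\]
The key observation is that the rule application $Y_{n-1} \underset{r_0}{\Rightarrow} Y^\prime$ extends verbatim to the larger hypergraph: since every $\DPO(r)^\bullet$ is a zero-rank handle sitting as a separate connected component, the context $C^\prime$ witnessing $Y_{n-1} \underset{r_0}{\Rightarrow} Y^\prime$ can be enlarged by these components to witness
\[
Y_{n-1} + \sum_{r \in P_N} k_r \cdot \DPO(r)^\bullet \underset{r_0}{\Rightarrow} Y^\prime + \sum_{r \in P_N} k_r \cdot \DPO(r)^\bullet.
\]
Applying Lemma \ref{lemma_dpo} to this rule application and the derivable sequent above yields
\[
\HL \vdash Y^\prime + \sum_{r \in P_N} k_r \cdot \DPO(r)^\bullet + \DPO(r_0)^\bullet \to A.
\]
Setting $k_{r_0}^\prime \eqdef k_{r_0} + 1$ and $k_r^\prime \eqdef k_r$ for $r \neq r_0$ gives the required conclusion.

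The only subtlety I would expect to spell out carefully is the compatibility of disjoint union with DPO rule application used in the inductive step, i.e.\ justifying that the context $C^\prime$ for the step $Y_{n-1} \underset{r_0}{\Rightarrow} Y^\prime$ can be extended by isolated zero-rank components without affecting the pushout squares. This is immediate from the extended definition of rule application in terms of replacement given just before Construction \ref{construction_normalized}, so it does not constitute a real obstacle; everything else is routine bookkeeping.
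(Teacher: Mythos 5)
Your proof is correct and follows exactly the route the paper intends: the paper dispatches this lemma in one line (``directly follows from Lemma \ref{lemma_dpo}; $k_r$ is the number of applications of $r$ in the derivation $Y \Rightarrow^\ast Y^\prime$''), and your induction on the derivation length, peeling off the last rule application and invoking Lemma \ref{lemma_dpo} on the antecedent enlarged by the isolated $\DPO(r)^\bullet$ components, is precisely the intended elaboration. The one subtlety you flag --- that a DPO rule application is preserved under disjoint union with zero-rank components --- is indeed the only point needing a remark, and your justification via the replacement-based definition of rule application is the right one.
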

\begin{lemma}[main]\label{lemma_dpo_main}
	Let $Y^\prime$ be a hypergraph; let $\langle N, \Sigma, P, S^\bullet \rangle$ be a normalized grammar with the set $P_N$ of nonterminal rules; let $X$ be a nonterminal symbol. Then $X^\bullet \Rightarrow^\ast Y^\prime$ using rules from $P_N$ if and only if $\mathrm{HL} \vdash Y^\prime+\sum\limits_{r \in P_N} b_r\cdot (!\DPO(r))^\bullet + \sum\limits_{r \in P_N} k_r\cdot (\DPO(r))^\bullet \to X$ for some $b_r,k_r \in \mathbb{N}$.
\end{lemma}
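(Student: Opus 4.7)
The two directions differ substantially in difficulty.

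For the \emph{only if} direction, take $b_r = 0$ for every $r$. Starting from the axiom $\HMEL \vdash X^\bullet \to X$ and the hypothesis $X^\bullet \Rightarrow^\ast Y^\prime$ via nonterminal rules, Lemma \ref{lemma_dpo_many} directly supplies natural numbers $k_r$ such that $\HMEL \vdash Y^\prime + \sum_{r \in P_N} k_r \cdot (\DPO(r))^\bullet \to X$, which is the right-hand side of the biconditional with all $b_r = 0$.

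For the \emph{if} direction, the plan is to invoke cut elimination (Theorem \ref{th_cut_HMEL}) and argue by induction on the length of a cut-free derivation. A stronger statement must be proved so that the induction is closed under the premises of all rules: for any nonterminal $Z$ and any hypergraph $F$ whose hyperedges are labeled either by nonterminals or by types of the form $\times(\widehat{L})$ for some $\widehat{L}$ arising from a rule (such $\times$-handles can appear through backward $(\div\to)$ steps), if the sequent $F + \sum b_r (!\DPO(r))^\bullet + \sum k_r (\DPO(r))^\bullet \to Z$ is cut-free derivable in $\HMEL$, then $Z^\bullet \Rightarrow^\ast F^\flat$ holds in the grammar, where $F^\flat$ denotes $F$ with every top-level $\times$-handle unfolded. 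The axiom case is immediate. The rules $(\mathrm{w})$, $(\mathrm{c})$, $(!\to)$ on some $!\DPO(r)^\bullet$ merely adjust the counts $b_r$ and $k_r$ while leaving $F$ unchanged, so the IH transfers verbatim. Right-introduction rules are excluded because $Z$ is primitive, and a backward $(\times\to)$ on a top-level $\times$-handle of $F$ simply unfolds it, which is absorbed into the $F \mapsto F^\flat$ operation.

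The critical case is $(\div\to)$ applied to some occurrence of $\DPO(r_0) = \times(\widehat{L_0}) \div (\widehat{R_0} + \$_0^\bullet)$. Here the conclusion's $F$ is obtained from the main premise's antecedent $\widetilde{F}$ by substituting the chunk $\widehat{R_0}[d_i/H_i]$ in place of an occurrence of $\times(\widehat{L_0})^\bullet$ (at the same attachment nodes), while a separate $\DPO(r_0)^\bullet$ hyperedge accounts for the fact that $k_{r_0}$ in the conclusion is one larger than in the main premise; the side premises $H_i \to lab_{\widehat{R_0}}(d_i)$ supply the expansions. Applying the IH to the main premise produces a grammar derivation $Z^\bullet \Rightarrow^\ast \widetilde{F}^\flat$, which contains $\widehat{L_0}$ at the position of interest. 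A single application of rule $r_0$ yields $\widetilde{F}^\flat \Rightarrow F_0$, where $F_0$ has $\widehat{R_0}$ in place of $\widehat{L_0}$ with $d_i$-edges still carrying the labels $lab_{\widehat{R_0}}(d_i)$. Applying the IH to each side premise produces a grammar derivation $lab_{\widehat{R_0}}(d_i)^\bullet \Rightarrow^\ast H_i^\flat$; splicing these at the $d_i$-positions transforms $F_0$ into the intended $F^\flat$. Concatenation gives $Z^\bullet \Rightarrow^\ast F^\flat$, completing the case.

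The main obstacle I anticipate lies in the $(\div\to)$ case. One must correctly pinpoint the subhypergraph of the conclusion's antecedent that matches the pattern $\widehat{R_0}[d_i/H_i]$, which is uniquely determined by the rule instance and can be read off from the derivation itself. One must also correctly distribute the counts $b_r$ and $k_r$ between the main premise and the side premises; here the free duplication and discarding of $!$-types via the rules $(\mathrm{c})$ and $(\mathrm{w})$ is essential, since each side derivation may need to consume additional copies of $!\DPO(r)$-types. Finally, the induction parameter (cut-free derivation length) must decrease strictly for every recursive appeal, including the side-premise invocations; this is automatic because each premise of $(\div\to)$ is a proper subderivation.
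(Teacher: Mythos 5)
Your proposal is correct and follows essentially the same route as the paper: induction on the length of the $\HMEL$-derivation, with the key $(\div\to)$ case simulated by one application of the grammar rule $r_0$ followed by splicing in the grammar derivations obtained from the side premises, and the exponential rules $(!\to)$, $(\mathrm{w})$, $(\mathrm{c})$ merely adjusting the counts $b_r,k_r$. The only cosmetic differences are that you absorb the $\times(\widehat{L})$-handles into a strengthened induction invariant (via $F^\flat$) where the paper unfolds them on the spot using the invertibility of $(\times\to)$ (Remark \ref{remark_invertibility}), and that you take $b_r=0$ in the ``only if'' direction where the paper optionally wraps the $\DPO(r)$-hyperedges back into $!$-types --- both choices are harmless since the lemma quantifies existentially over $b_r,k_r$.
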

Lemma \ref{lemma_dpo} is proved by straightforwardly applying $(\times\to)$ and then $(\div\to)$ to the sequent $Y \to A$ (below a representative example is provided). Lemma \ref{lemma_dpo_many} directly follows from Lemma \ref{lemma_dpo}. Note that $k_r$ in it is the number of applications of $r$ in the derivation $Y\Rightarrow^\ast Y^\prime$.

Therefore, a DPO derivation can be remodeled within $\HL$ but each rule application of $r$ leaves a trace, namely, a floating hyperedge labeled by $\DPO(r)$ in the antecedent. Using the exponential modality we can unify these hyperedges into a single one.

\begin{example}\label{ex_hl_derivation}
	The following derivation illustrates Lemma \ref{lemma_dpo}:
	$$
	\infer[(\div\to)]{
		\vcenter{\hbox{{\tikz[baseline=.1ex]{
						\node[node] (N) {};
						\node[hyperedge,below=3.5mm of N] (E1) {$t$};
						\node[node, below left=11mm and 6.3mm of N] (N1) {};
						\node[node, below right=11mm and 6.3mm of N] (N2) {};
						\node[hyperedge, right=5mm of N] (F) {$f$};
						\draw[->,black] (N1) -- node[below] {$p$} (N2);
						\draw[-,black] (N) -- node[left] {\scriptsize 1} (E1);
						\draw[-,black] (N1) -- node[above] {\scriptsize 2} (E1);
						\draw[-,black] (N2) -- node[above] {\scriptsize 3} (E1);
						\draw[-,black] (N) -- node[above] {\scriptsize 1} (F);
		}}}}\;\;\vcenter{\hbox{{\tikz[baseline=.1ex]{\node[hyperedge] {$\DPO(\rho)$};}}}}\quad \to \quad
		\times\left(\vcenter{\hbox{{\tikz[baseline=.1ex]{
						\node[node] (N) {};
						\node[node, below left=6.3mm and 6.3mm of N] (N1) {};
						\node[node, below right=6.3mm and 6.3mm of N] (N2) {};
						\draw[->,black] (N) -- node[above left] {$l$} (N1);
						\draw[->,black] (N) -- node[above right] {$r$} (N2);
						\draw[->,black] (N1) -- node[below] {$p$} (N2);
		}}}}\right)
	}
	{
		\infer[(\times\to)]{
			\vcenter{\hbox{{\tikz[baseline=.1ex]{
							\node[node] (N) {};
							\node[hyperedge,below=3.5mm of N] (E1) {$\times(\widehat{L})$};
							\node[node, below left=11mm and 6.3mm of N] (N1) {};
							\node[node, below right=11mm and 6.3mm of N] (N2) {};
							\draw[->,black] (N1) -- node[below] {$p$} (N2);
							\draw[-,black] (N) -- node[left] {\scriptsize 1} (E1);
							\draw[-,black] (N1) -- node[above] {\scriptsize 2} (E1);
							\draw[-,black] (N2) -- node[above] {\scriptsize 3} (E1);
			}}}} \to \times\left(\vcenter{\hbox{{\tikz[baseline=.1ex]{
							\node[node] (N) {};
							\node[node, below left=6.3mm and 6.3mm of N] (N1) {};
							\node[node, below right=6.3mm and 6.3mm of N] (N2) {};
							\draw[->,black] (N) -- node[above left] {$l$} (N1);
							\draw[->,black] (N) -- node[above right] {$r$} (N2);
							\draw[->,black] (N1) -- node[below] {$p$} (N2);
			}}}}\right)
		}
		{
			\infer[(\to\times)]{
				\vcenter{\hbox{{\tikz[baseline=.1ex]{
								\node[node] (N) {};
								\node[node, below left=6.3mm and 6.3mm of N] (N1) {};
								\node[node, below right=6.3mm and 6.3mm of N] (N2) {};
								\draw[->,black] (N) -- node[above left] {$l$} (N1);
								\draw[->,black] (N) -- node[above right] {$r$} (N2);
								\draw[->,black] (N1) -- node[below] {$p$} (N2);
				}}}} \to \times\left(\vcenter{\hbox{{\tikz[baseline=.1ex]{
								\node[node] (N) {};
								\node[node, below left=6.3mm and 6.3mm of N] (N1) {};
								\node[node, below right=6.3mm and 6.3mm of N] (N2) {};
								\draw[->,black] (N) -- node[above left] {$l$} (N1);
								\draw[->,black] (N) -- node[above right] {$r$} (N2);
								\draw[->,black] (N1) -- node[below] {$p$} (N2);
				}}}}\right)
			}
			{
				l^\bullet \to l & r^\bullet \to r & p^\bullet \to p
			}
		}
		&
		t^\bullet \to t
		&
		f^\bullet \to f
	}
	$$
	Here the sequent $Y \to A$ equals $\vcenter{\hbox{{\tikz[baseline=.1ex]{
					\node[node] (N) {};
					\node[node, below left=3mm and 3mm of N] (N1) {};
					\node[node, below right=3mm and 3mm of N] (N2) {};
					\draw[->,black] (N) -- node[above left] {$l$} (N1);
					\draw[->,black] (N) -- node[above right] {$r$} (N2);
					\draw[->,black] (N1) -- node[below] {$p$} (N2);
	}}}} \to \times\left(\vcenter{\hbox{{\tikz[baseline=.1ex]{
					\node[node] (N) {};
					\node[node, below left=3mm and 3mm of N] (N1) {};
					\node[node, below right=3mm and 3mm of N] (N2) {};
					\draw[->,black] (N) -- node[above left] {$l$} (N1);
					\draw[->,black] (N) -- node[above right] {$r$} (N2);
					\draw[->,black] (N1) -- node[below] {$p$} (N2);
	}}}}\right)$, and the rule $r$ equals $\rho$ from Example \ref{ex_dpo_rule}.
\end{example}
It remains to prove Lemma \ref{lemma_dpo_main}:
\begin{proof}[Proof (of Lemma \ref{lemma_dpo_main}).]
	The ``only if'' direction straightforwardly follows from Lemma \ref{lemma_dpo_many}: if we take $Y=X^\bullet$, $A=X$, then $Y \to A$ is an axiom $X^\bullet \to X$, hence $Y^\prime+\sum\limits_{r \in P_N} k_{r}\cdot \DPO(r)^\bullet \to A$ is derivable for some $k_r$. Finally, one can derive the sequent $Y^\prime+\sum\limits_{r \in P_N} (!\DPO(r))^\bullet \to X$ from it using the rules $(!\to)$, $(\mathrm{w})$, and $(\mathrm{c})$. 
	
	The ``if'' direction is proved by induction on the length of a derivation of $Y^\prime+\sum\limits_{r \in P_N} b_r\cdot (!\DPO(r))^\bullet+\sum\limits_{r \in P_N} k_r\cdot (\DPO(r))^\bullet \to X$ in $\HMEL$. For the sake of brevity, we denote the sum $\sum\limits_{r \in P_N} b_r\cdot (!\DPO(r))^\bullet+\sum\limits_{r \in P_N} k_r\cdot (\DPO(r))^\bullet$ as $\sigma\{b_r;k_r\}$.
	
	The base case is trivial. Let us prove the induction step. The proof is done by considering the last rule applied in a derivation of a sequent.
	
	\textit{Case 1.} The last rule applied is $(\div \to)$. In this case, its application must be of the form
	$$
	\infer[]{
		Y^\prime+\sigma\{b^\prime_r,k^\prime_r\}+ (\DPO(r_0))^\bullet \to X
	}{
		Y^0\left[e_0/(\times(\widehat{L}))^\bullet\right]+\sigma\{b^0_r,k^0_r\} \to X
		&
		Y^1+\sigma\{b^1_r,k^1_r\} \to lab_{\widehat{R}}(e_1)
		&
		\dotsc
		&
		Y^l+\sigma\{b^l_r,k^l_r\} \to lab_{\widehat{R}}(e_l)
	}
	$$
	Here $\DPO(r_0) = \times(\widehat{L})\div (\widehat{R} + \$_0^\bullet)$; $E_{\widehat{R}} = \{e_1,\dotsc,e_l\}$; $b^\prime_r=b_r$ for all $r \in P_N$, $k^\prime_r = k_r$ for $r \ne r_0$ and $k^\prime_{r_0} = k_{r_0}-1$; for all $r \in P_N$ it holds that $b^\prime_r = b^0_r+\dotsc+b^l_r$, $k^\prime_r = k^0_r+\dotsc+k^l_r$; and $$Y^\prime = Y^0\left[e_0/\widehat{R}\left[e_1/Y^1,\dotsc, e_l/Y^l\right]\right].$$
	
	This is just an explicit general form of the rule application of $(\div\to)$ possible for the given sequent. Remark \ref{remark_invertibility} implies that the sequent $Y^0\left[e_0/\widehat{L}\right]+\sigma\{b^0_r,k^0_r\} \to X$ is also derivable. Applying the induction hypothesis, we obtain that $X^\bullet \Rightarrow^\ast Y^0\left[e_0/\widehat{L}\right]$. Then, $Y^0\left[e_0/\widehat{L}\right] \underset{r_0}{\Rightarrow} Y^0\left[e_0/\widehat{R}\right]$. Finally, let us apply the induction hypothesis to the sequent $Y^i+\sigma\{b^i_r,k^i_r\} \to lab_{\widehat{R}}(e_i)$ for each $i=1,\dotsc,l$, therefore concluding that $lab_{\widehat{R}}(e_i)^\bullet \Rightarrow^\ast Y^i$. Now it remains to start with the hypergraph $Y^0\left[e_0/\widehat{R}\right]$ and to successively remake each derivation $lab_{\widehat{R}}(e_i)^\bullet \Rightarrow^\ast Y^i$ within it (for $i=1,\dotsc,l$), thus replacing each hyperedge $e_i$ of $\widehat{R}$ by $Y^i$. Therefore, we obtain that 
	$$
	X^\bullet \Rightarrow^\ast Y^0\left[e_0/\widehat{L}\right] \Rightarrow Y^0\left[e_0/\widehat{R}\right] \Rightarrow^\ast Y^0\left[e_0/\widehat{R}\left[e_1/Y^1,\dotsc, e_l/Y^l\right]\right] = Y^\prime.
	$$
	
	\textit{Case 2.} The last rule applied is either (i) the rule $(!\to)$, or (ii) the rule $(\mathrm{w})$, or (iii) the rule $(\mathrm{c})$. Assume that the main type of this rule application is $!\DPO(r_0)$ for $r_0 \in P_N$. Then it must be of the form
	
	$$
	\infer[]{
		Y^\prime+\sigma\{b_r,k_r\} \to X
	}{
		Y^\prime+\sigma\{b^\prime_r,k^\prime_r\} \to X
	}
	$$
	Here $b^\prime_r=b_r$, $k^\prime_r = k_r$ for $r\ne r_0$; in case (i), $b^\prime_{r_0} = b_{r_0}-1$, $k^\prime_{r_0} = k_{r_0}+1$; in case (ii), $b^\prime_{r_0} = b_{r_0}-1$, $k^\prime_{r_0} = k_{r_0}$; in case (i), $b^\prime_{r_0} = b_{r_0}+1$, $k^\prime_{r_0} = k_{r_0}$. The induction hypothesis completes the proof in all the three subcases.
\end{proof}

Now we introduce an $\HMEL$-grammar corresponding to a given DPO grammar. 
\begin{construction}\label{construction_hmel_from_dpo}
	Let $HGr=\langle N, \Sigma, P, S^\bullet \rangle$ be a normalized DPO grammar. Then $\mathrm{LG}(HGr) = \langle \Sigma, S^\prime, \triangleright \rangle$ where $\triangleright$ consists of pairs $a \triangleright T_a$, and
	$
	S^\prime = S \div \left(\sum\limits_{r \in P_N} \left(!\DPO(r)\right)^\bullet+\$_0^\bullet\right).
	$
	\\
	Here we apply the exponential to each type $\DPO(r)$ (for $r \in P_N$) and store the result in $S^\prime$.
\end{construction}
This construction turns out to be a straightforward generalization of the one from \cite{Kanazawa99}. Using it we can prove the following lemma:
\begin{lemma}\label{th_dpo_to_hmel}
	If $HGr$ is a normalized DPO grammar, then $L(\mathrm{LG}(HGr)) = L(HGr)$.
\end{lemma}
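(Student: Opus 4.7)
The plan is to prove both inclusions, with Lemma \ref{lemma_dpo_main} serving as the bridge between DPO derivability and derivability in $\HMEL$. As a preliminary, I record three facts: (i) since all DPO rules have zero-rank $L$ and $R$ and $Z = S^\bullet$ is zero rank, both $S$ and every hypergraph in $L(HGr)$ have rank $0$, and consequently $rk(S') = 0$ because the $\$$-hyperedge of the divisor is $\$_0^\bullet$; (ii) the relation $\triangleright$ in $\mathrm{LG}(HGr)$ is functional, so for each $H \in \mathcal{H}(\Sigma)$ the unique admissible assignment is $f_H(e) = T_{lab_H(e)}$, and I write $H^\ast$ for the resulting relabeling; (iii) showing $H \in L(\mathrm{LG}(HGr))$ therefore reduces to deriving $H^\ast \to S'$ in $\HMEL$.

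For the inclusion $L(HGr) \subseteq L(\mathrm{LG}(HGr))$, given $H \in L(HGr)$, I first rearrange the derivation $S^\bullet \Rightarrow^\ast H$ into a prefix using only rules of $P_N$ (producing exactly $H^\ast$) followed by a suffix of rules from $P_T$, each relabeling one $T_a$-hyperedge to $a$. The rearrangement is legitimate because a terminal rule produces an $a$-labeled hyperedge that no nonterminal rule can ever match (nonterminal rule redexes carry labels in $N^\prime$, and $\Sigma \cap N^\prime = \emptyset$), so any terminal step commutes past any later nonterminal step. Having isolated $S^\bullet \Rightarrow^\ast_{P_N} H^\ast$, the ``only if'' direction of Lemma \ref{lemma_dpo_main} gives $\HMEL \vdash H^\ast + \sum_{r \in P_N}(!\DPO(r))^\bullet \to S$, and a single application of $(\to\div)$ with $F = H^\ast$, $N = S$, and $D = \sum_{r \in P_N}(!\DPO(r))^\bullet + \$_0^\bullet$ concludes $\HMEL \vdash H^\ast \to S'$.

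For the converse inclusion, from $H \in L(\mathrm{LG}(HGr))$ one has $\HMEL \vdash H^\ast \to S'$. Since Theorem \ref{th_cut_HMEL} extends cut elimination to $\HMEL$, the invertibility of $(\to\div)$ stated in Remark \ref{remark_invertibility} still applies and yields $\HMEL \vdash H^\ast + \sum_{r \in P_N}(!\DPO(r))^\bullet \to S$. The ``if'' half of Lemma \ref{lemma_dpo_main}, instantiated with $b_r = 1$ and $k_r = 0$, then furnishes $S^\bullet \Rightarrow^\ast_{P_N} H^\ast$, after which one rule from $P_T$ per hyperedge of $H^\ast$ converts each $T_a$ back into $a$, completing $S^\bullet \Rightarrow^\ast H$ in $HGr$ and hence $H \in L(HGr)$.

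I expect the main care point to be the reordering argument in the forward direction, which hinges on the disjointness of $\Sigma$ and $N^\prime$ ensured by Construction \ref{construction_normalized} and on the fact that terminal rewrites only relabel a hyperedge without changing the hypergraph's structure. Beyond that, the proof is a clean assembly of Lemma \ref{lemma_dpo_main}, cut elimination for $\HMEL$, the invertibility of $(\to\div)$, and a single introduction of $(\to\div)$; no further obstacle is anticipated.
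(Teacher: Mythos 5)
Your proposal is correct and follows essentially the same route as the paper's proof: reduce membership in $L(\mathrm{LG}(HGr))$ to derivability of $t(H) \to S^\prime$, pass to $t(H) + \sum_{r\in P_N}(!\DPO(r))^\bullet \to S$ via invertibility of $(\to\div)$ (justified by cut elimination, Theorem \ref{th_cut_HMEL}), and invoke Lemma \ref{lemma_dpo_main} to translate to and from DPO derivability by nonterminal rules. The only difference is that you spell out the postponement of terminal rules, which the paper dismisses as ``straightforward,'' and you route the forward direction through Lemma \ref{lemma_dpo_main} alone where the paper also cites Lemma \ref{lemma_dpo_many}; neither affects correctness.
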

\begin{proof}
	A hypergraph $H \in \mathcal{H}(\Sigma)$ belongs to $L(\mathrm{LG}(HGr))$ if and only if $\HMEL \vdash t(H) \to S^\prime$ where $t(H)$ is the relabeling of $H$ such that $t(e) = T_{lab_H(e)}$. This sequent is equiderivable with the sequent $t(H) + \sum\limits_{r \in P_N} \left(!\DPO(r)\right)^\bullet \to S$ according to Remark \ref{remark_invertibility} for $\HMEL$. According to Lemmas \ref{lemma_dpo_many} and \ref{lemma_dpo_main}, this happens if and only if $S^\bullet \Rightarrow^\ast t(H)$ by means of nonterminal rules of $HGr$. It is straightforward to show that this is equivalent to the fact that $H \in L(HGr)$.
\end{proof}
Finally we come up with the following result:
\begin{theorem}
	$\HMEL$-grammars are at least as expressive as DPO grammars.
\end{theorem}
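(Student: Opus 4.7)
The theorem essentially packages together the infrastructure already built up in the section, so the plan is to chain the existing constructions and lemmas rather than do anything new. Given an arbitrary DPO grammar $HGr_0 = \langle N, \Sigma, P, S^\bullet \rangle$, I would first invoke Construction \ref{construction_normalized} to pass to an equivalent normalized grammar $HGr$; this step is cost-free because normalization preserves the generated language by design. Then I would apply Construction \ref{construction_hmel_from_dpo} to $HGr$ to obtain the $\HMEL$-grammar $\mathrm{LG}(HGr) = \langle \Sigma, S', \triangleright \rangle$, whose distinguished type packs every $\DPO(r)$ for $r \in P_N$ under a $!$-modality inside a division.

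The next step is just to cite Lemma \ref{th_dpo_to_hmel}, which gives $L(\mathrm{LG}(HGr)) = L(HGr)$. Combined with the fact that $L(HGr) = L(HGr_0)$, this yields $L(\mathrm{LG}(HGr)) = L(HGr_0)$, so $\mathrm{LG}(HGr)$ is an $\HMEL$-grammar witnessing that the language of $HGr_0$ is generated by an $\HMEL$-grammar. Since $HGr_0$ was arbitrary, the inclusion of DPO-languages into $\HMEL$-languages follows.

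There is no genuine obstacle left at this stage: the real content lives in Lemma \ref{lemma_dpo_main} (which handles the bridge between DPO derivations and derivations in $\HMEL$ via the $\DPO(r)$ types) and in Theorem \ref{th_cut_HMEL} (which, through Remark \ref{remark_invertibility} lifted to $\HMEL$, is what makes the invertibility step in the proof of Lemma \ref{th_dpo_to_hmel} legitimate). The only thing worth being careful about in writing up the proof is not to re-prove these facts but to state the chain of equalities $L(HGr_0) = L(HGr) = L(\mathrm{LG}(HGr))$ cleanly and to note explicitly that $\mathrm{LG}(HGr)$ is an $\HMEL$-grammar in the sense of Definition \ref{def_HMEL}, so that the theorem statement (``at least as expressive'') is formally justified.
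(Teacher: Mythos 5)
Your proposal is correct and matches the paper's own (implicit) argument exactly: the theorem is stated as an immediate corollary of Lemma \ref{th_dpo_to_hmel} together with the language-preserving normalization of Construction \ref{construction_normalized}. Chaining $L(HGr_0)=L(HGr)=L(\mathrm{LG}(HGr))$ is all that is needed.
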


\section{DPO Grammars With Linear Restriction and HL-Grammars}\label{sec_lin_dpo_to_hl}

The above construction strongly relies on the exponential modality, which enables one to compress several copies of the same type in the antecedent into a single one. In the case of $\HL$ where we do not have the $!$ modality, Construction \ref{construction_hmel_from_dpo} does not work anymore. Moreover, it is clear that it must fail: languages generated by $\HL$-grammars are decidable and even are in NP while DPO grammars generate at least all recursively enumerable string languages, so the two classes of grammars do not generate the same languages. However, Construction \ref{construction_dpo_rule} still works since it uses only $\times$ and $\div$. It facilitates to link $\HL$-grammars with DPO grammars with a certain restriction formally defined below.
\begin{definition}
	Given a DPO hypergraph grammar $HGr = \langle N, \Sigma, P, S^\bullet \rangle$, let $L_c(HGr)$ consist of all hypergraphs $H \in L(HGr)$ such that there exists a derivation $S^\bullet \Rightarrow^\ast H$ with no more than $c\cdot |E_H|$ steps.
\end{definition}
\begin{construction}\label{construction_truncated_dpo_grammar}
	Let $HGr=\langle N, \Sigma, P, S^\bullet \rangle$ be a normalized grammar; let $c \in \mathbb{N}$. Then we construct an $\HL$-grammar $\mathrm{LG}_c(HGr) = \langle \Sigma, S, \triangleright \rangle$ where $\triangleright$ contains exactly the following pairs: 
	$$
	a \triangleright \times\left(T_a^\bullet + \sum\limits_{r \in P_N} k_{r}\cdot \DPO(r)^\bullet\right)
	\mbox{ for }
	k_r \ge 0, \; \sum\limits_{r \in P_N} k_{r} \le c.
	$$
	Note that $\triangleright$ is a finite relation since there are finitely many $k_r \in \mathbb{N}$ satisfying the above requirements.
\end{construction}

\begin{example}\label{ex_dpo_to_hl}
	Consider a DPO grammar $HGr = \langle N, \Sigma,P, S^\bullet\rangle$ where $N = \{S\}$, $\Sigma = \{a\}$, and $P = \{r_1,r_2,r_3\}$:
	\begin{enumerate}
		\item $r_1 = \left(\vcenter{\hbox{{\tikz[baseline=.1ex]{
						\node[hyperedge] {$S$}
		}}}} \;\leftarrow\; \Dis[0] \;\rightarrow\; \Dis[0] \right)$;
		\item $r_2 = \left( \vcenter{\hbox{{\tikz[baseline=.1ex]{
						\node[hyperedge] {$S$};
		}}}} \;\leftarrow\; \Dis[0] \;\rightarrow\;
		\vcenter{\hbox{{\tikz[baseline=.1ex]{
						\node[hyperedge] (E) {$S$};
						\node[node,right=3mm of E] (N) {};
		}}}} \right)$;
		\item $r_3 = \left( \vcenter{\hbox{{\tikz[baseline=.1ex]{
						\node[node, label=left:{\scriptsize $\{ 1 \}$}] (N1) {};
						\node[node, right = 5mm of N1, label=right:{\scriptsize $\{ 2 \}$}] (N2) {};
		}}}} \;\leftarrow\; \Dis[2] \;\rightarrow\; 
		\vcenter{\hbox{{\tikz[baseline=.1ex]{
						\node[node, label=left:{\scriptsize $\{ 1 \}$}] (N1) {};
						\node[node, right = 5mm of N1, label=right:{\scriptsize $\{ 2 \}$}] (N2) {};
						\draw[->,black] (N1) -- node[above] {$a$} (N2);
		}}}} \right)$.
	\end{enumerate}
	It is not hard to see that it generates all graphs (with edges having two attachment nodes): the rule $r_2$ produces nodes while $r_3$ produces edges. Consider e.g. the following derivation:
	\begin{equation}\label{eq_dpo_der}
		\vcenter{\hbox{{\tikz[baseline=.1ex]{
						\node[hyperedge] {$S$}
		}}}}
		\;\;\underset{r_2}{\Rightarrow}\;\;
		\vcenter{\hbox{{\tikz[baseline=.1ex]{
						\node[hyperedge] (E) {$S$};
						\node[node,right=3mm of E] (N) {};
		}}}}
		\;\;\underset{r_2}{\Rightarrow}\;\;
		\vcenter{\hbox{{\tikz[baseline=.1ex]{
						\node[hyperedge] (E) {$S$};
						\node[node,right=3mm of E] (N1) {};
						\node[node,right=3mm of N1] (N2) {};
		}}}}
		\;\;\underset{r_1}{\Rightarrow}\;\;
		\vcenter{\hbox{{\tikz[baseline=.1ex]{
						\node[node] (N1) {};
						\node[node, right = 5mm of N1] (N2) {};
		}}}}
		\;\;\underset{r_3}{\Rightarrow}\;\;
		\vcenter{\hbox{{\tikz[baseline=.1ex]{
						\node[node] (N1) {};
						\node[node, right = 5mm of N1] (N2) {};
						\draw[->,black] (N1) -- node[above] {$a$} (N2);
		}}}}
		\;\;\underset{r_3}{\Rightarrow}\;\;
		\vcenter{\hbox{{\tikz[baseline=.1ex]{
						\node[node] (N1) {};
						\node[node, right = 5mm of N1] {} edge [in=45,out=-45,loop] (N2);
						\node[right=2.5mm of N2] (T) {$a$};
						\draw[->,black] (N1) -- node[above] {$a$} (N2);
		}}}}
		\;\;\underset{r_3}{\Rightarrow}\;\;
		\vcenter{\hbox{{\tikz[baseline=.1ex]{
						\node[node] {} edge [in=135,out=-135,loop] (N1);
						\node[left=2.5mm of N1] (T1) {$a$};
						\node[node, right = 5mm of N1] {} edge [in=45,out=-45,loop] (N2);
						\node[right=2.5mm of N2] (T2) {$a$};
						\draw[->,black] (N1) -- node[above] {$a$} (N2);
		}}}}
	\end{equation}
	\noindent 
	Note that $HGr$ is not normalized; using Construction \ref{construction_normalized} we replace $r_3$ by the following two rules:
	
	\begin{enumerate}
		\item $r_3^\prime = \left( \vcenter{\hbox{{\tikz[baseline=.1ex]{
						\node[node, label=left:{\scriptsize $\{ 1 \}$}] (N1) {};
						\node[node, right = 5mm of N1, label=right:{\scriptsize $\{ 2 \}$}] (N2) {};
		}}}} \;\leftarrow\; \Dis[2] \;\rightarrow\; 
		\vcenter{\hbox{{\tikz[baseline=.1ex]{
						\node[node, label=left:{\scriptsize $\{ 1 \}$}] (N1) {};
						\node[node, right = 5mm of N1, label=right:{\scriptsize $\{ 2 \}$}] (N2) {};
						\draw[->,black] (N1) -- node[above] {$T_a$} (N2);
		}}}} \right)$;
		\item $r_3^{\prime\prime} = \left( \vcenter{\hbox{{\tikz[baseline=.1ex]{
						\node[node, label=left:{\scriptsize $\{ 1 \}$}] (N1) {};
						\node[node, right = 5mm of N1, label=right:{\scriptsize $\{ 2 \}$}] (N2) {};
						\draw[->,black] (N1) -- node[above] {$T_a$} (N2);
		}}}} \;\leftarrow\; \Dis[2] \;\rightarrow\; 
		\vcenter{\hbox{{\tikz[baseline=.1ex]{
						\node[node, label=left:{\scriptsize $\{ 1 \}$}] (N1) {};
						\node[node, right = 5mm of N1, label=right:{\scriptsize $\{ 2 \}$}] (N2) {};
						\draw[->,black] (N1) -- node[above] {$a$} (N2);
		}}}} \right)$.
	\end{enumerate}
	\noindent
	Let us denote this new normalized grammar $HGr^\prime$. Then we convert its nonterminal rules into types using Construction \ref{construction_dpo_rule}:	
	\begin{enumerate}
		\item $X_1 = \DPO(r_1) = \times\left( \vcenter{\hbox{{\tikz[baseline=.1ex]{
						\node[hyperedge] (E) {$S$};
		}}}} \right) \div \left( \vcenter{\hbox{{\tikz[baseline=.1ex]{
						\node[hyperedge] {$\$_0$};
		}}}} \right)$;
		\item $X_2 = \DPO(r_2) = \times\left( \vcenter{\hbox{{\tikz[baseline=.1ex]{
						\node[hyperedge] (E) {$S$};
		}}}} \right) \div \left( \vcenter{\hbox{{\tikz[baseline=.1ex]{
						\node[hyperedge] (E) {$S$};
						\node[node,right=3mm of E] (N) {};
						\node[hyperedge,right=3mm of N] {$\$_0$};
		}}}} \right)$;
		\item $X_3 = \DPO(r_3^\prime) = \times\left( \vcenter{\hbox{{\tikz[baseline=.1ex]{
						\node[node, label=left:{\scriptsize $(1)$}] (N1) {};
						\node[node, right = 5mm of N1, label=right:{\scriptsize $(2)$}] (N2) {};
		}}}}\right)\div\left( \vcenter{\hbox{{\tikz[baseline=.1ex]{
						\node[node, label=left:{\scriptsize $(1)$}] (N1) {};
						\node[node, right = 5mm of N1, label=right:{\scriptsize $(2)$}] (N2) {};
						\node[hyperedge,above right=-2mm and 8mm of N2] {$\$_0$};
						\draw[->,black] (N1) -- node[above] {$T_a$} (N2);
		}}}} \right)$;
	\end{enumerate}
	\noindent
	Finally, we introduce an $\HL$-grammar $\mathrm{LG}_2(HGr^\prime) = \langle \Sigma, S, \triangleright \rangle$ according to Construction \ref{construction_truncated_dpo_grammar}. The binary relation $\triangleright$ consists of the following 13 pairs (in fact, of the 10 distinct pairs) where $i,j \in \{1,2,3\}$:
	\begin{multicols}{2}
		\begin{itemize}
			\item $a \triangleright T = \times\left( \vcenter{\hbox{{\tikz[baseline=.1ex]{
							\node[node, label=left:{\scriptsize $(1)$}] (N1) {};
							\node[node, right = 5mm of N1, label=right:{\scriptsize $(2)$}] (N2) {};
							\draw[->,black] (N1) -- node[above] {$T_a$} (N2);
			}}}}  \right)$;
			\item $a \triangleright T_i = \times\left( \vcenter{\hbox{{\tikz[baseline=.1ex]{
							\node[node, label=left:{\scriptsize $(1)$}] (N1) {};
							\node[node, right = 5mm of N1, label=right:{\scriptsize $(2)$}] (N2) {};
							\node[hyperedge,right=8mm of N2] {$X_i$};
							\draw[->,black] (N1) -- node[above] {$T_a$} (N2);
			}}}}  \right)$;
			\item $a \triangleright T_{ij} = \times\left( \vcenter{\hbox{{\tikz[baseline=.1ex]{
							\node[node, label=left:{\scriptsize $(1)$}] (N1) {};
							\node[node, right = 5mm of N1, label=right:{\scriptsize $(2)$}] (N2) {};
							\node[hyperedge,right=8mm of N2] (E1) {$X_i$};
							\node[hyperedge,right=3mm of E1] (E2) {$X_j$};
							\draw[->,black] (N1) -- node[above] {$T_a$} (N2);
			}}}}  \right)$.
		\end{itemize}
	\end{multicols}
	Recall that in order to check that a hypergraph belongs to $L(\mathrm{LG}_2(HGr^\prime))$ we need 1) to replace labels of its hyperedges by types corresponding to them via $\triangleright$; 2) to construct a sequent with the resulting hypergraph in the antecedent and with $S$ in the succedent; 3) to derive this sequent. Let us check, for example, that $H=\vcenter{\hbox{{\tikz[baseline=.1ex]{
					\node[node] {} edge [in=135,out=-135,loop] (N1);
					\node[left=2.5mm of N1] (T1) {$a$};
					\node[node, right = 5mm of N1] {} edge [in=45,out=-45,loop] (N2);
					\node[right=2.5mm of N2] (T2) {$a$};
					\draw[->,black] (N1) -- node[above] {$a$} (N2);
	}}}}$ belongs to $L(\mathrm{LG}_2(HGr^\prime))$. We replace each label $a$ by one of the types $T$, $T_i$, or $T_{ij}$ as follows: $\vcenter{\hbox{{\tikz[baseline=.1ex]{
	\node[node] {} edge [in=135,out=-135,loop] (N1);
	\node[left=2.5mm of N1] (T1) {$T_{22}$};
	\node[node, right = 5mm of N1] {} edge [in=45,out=-45,loop] (N2);
	\node[right=2.5mm of N2] (T2) {$T_{33}$};
	\draw[->,black] (N1) -- node[above] {$T_{13}$} (N2);
	}}}}$ (compare the indices of types with the numbers of rules applied in (\ref{eq_dpo_der})). Then, it remains to check that $\vcenter{\hbox{{\tikz[baseline=.1ex]{
	\node[node] {} edge [in=135,out=-135,loop] (N1);
	\node[left=2.5mm of N1] (T1) {$T_{22}$};
	\node[node, right = 5mm of N1] {} edge [in=45,out=-45,loop] (N2);
	\node[right=2.5mm of N2] (T2) {$T_{33}$};
	\draw[->,black] (N1) -- node[above] {$T_{13}$} (N2);
	}}}} \to S$ is derivable:
	$$
	\infer[(\times\to)]{
		\vcenter{\hbox{{\tikz[baseline=.1ex]{
						\node[node] {} edge [in=135,out=-135,loop] (N1);
						\node[left=2.5mm of N1] (T1) {$T_{22}$};
						\node[node, right = 5mm of N1] {} edge [in=45,out=-45,loop] (N2);
						\node[right=2.5mm of N2] (T2) {$T_{33}$};
						\draw[->,black] (N1) -- node[above] {$T_{13}$} (N2);
		}}}} \quad\to\quad S
	}{
		\infer[(\times\to)]{
			\vcenter{\hbox{{\tikz[baseline=.1ex]{
							\node[node] {} edge [in=135,out=-135,loop] (N1);
							\node[left=2.5mm of N1] (T1) {$T_a$};
							\node[node, right = 5mm of N1] {} edge [in=45,out=-45,loop] (N2);
							\node[right=2.5mm of N2] (T2) {$T_{33}$};
							\draw[->,black] (N1) -- node[above] {$T_{13}$} (N2);
			}}}}
			\;\;\vcenter{\hbox{{\tikz[baseline=.1ex]{\node[hyperedge] {$X_2$};}}}}
			\;\;\vcenter{\hbox{{\tikz[baseline=.1ex]{\node[hyperedge] {$X_2$};}}}} 
			\quad\to\quad S
		}{
			\infer[(\times\to)]{
				\vcenter{\hbox{{\tikz[baseline=.1ex]{
								\node[node] {} edge [in=135,out=-135,loop] (N1);
								\node[left=2.5mm of N1] (T1) {$T_a$};
								\node[node, right = 5mm of N1] {} edge [in=45,out=-45,loop] (N2);
								\node[right=2.5mm of N2] (T2) {$T_{33}$};
								\draw[->,black] (N1) -- node[above] {$T_{a}$} (N2);
				}}}}
				\;\;\vcenter{\hbox{{\tikz[baseline=.1ex]{\node[hyperedge] {$X_2$};}}}}
				\;\;\vcenter{\hbox{{\tikz[baseline=.1ex]{\node[hyperedge] {$X_2$};}}}} 
				\;\;\vcenter{\hbox{{\tikz[baseline=.1ex]{\node[hyperedge] {$X_1$};}}}}
				\;\;\vcenter{\hbox{{\tikz[baseline=.1ex]{\node[hyperedge] {$X_3$};}}}} 
				\quad\to\quad S
			}{
				\vcenter{\hbox{{\tikz[baseline=.1ex]{
								\node[node] {} edge [in=135,out=-135,loop] (N1);
								\node[left=2.5mm of N1] (T1) {$T_a$};
								\node[node, right = 5mm of N1] {} edge [in=45,out=-45,loop] (N2);
								\node[right=2.5mm of N2] (T2) {$T_a$};
								\draw[->,black] (N1) -- node[above] {$T_{a}$} (N2);
				}}}}
				\;\;\vcenter{\hbox{{\tikz[baseline=.1ex]{\node[hyperedge] {$X_2$};}}}}
				\;\;\vcenter{\hbox{{\tikz[baseline=.1ex]{\node[hyperedge] {$X_2$};}}}} 
				\;\;\vcenter{\hbox{{\tikz[baseline=.1ex]{\node[hyperedge] {$X_1$};}}}}
				\;\;\vcenter{\hbox{{\tikz[baseline=.1ex]{\node[hyperedge] {$X_3$};}}}}
				\;\;\vcenter{\hbox{{\tikz[baseline=.1ex]{\node[hyperedge] {$X_3$};}}}}
				\;\;\vcenter{\hbox{{\tikz[baseline=.1ex]{\node[hyperedge] {$X_3$};}}}} 
				\quad\to\quad S
			}
		}
	}
	$$
	The uppermost sequent in the above derivation is derivable, which follows from the proof of Lemma \ref{lemma_dpo_many}. This justifies that $H \in L(\mathrm{LG}_2(HGr^\prime))$.
\end{example}
The above example illustrates the following theorem, the main one in this work:
\begin{theorem}
	For each DPO grammar $HGr$ and each $c$ there is an $\HL$-grammar generating $L_c(HGr)$.
\end{theorem}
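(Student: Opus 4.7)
The plan is to prove $L(\mathrm{LG}_c(HGr')) = L_c(HGr)$, where $HGr'$ is the normalization of $HGr$ from Construction~\ref{construction_normalized}. Since a DPO derivation of $H$ of length $m$ in $HGr$ corresponds to a derivation $S^\bullet \Rightarrow^\ast t(H)$ in $HGr'$ using exactly $m$ nonterminal rule applications (followed by the $|E_H|$ forced terminal applications), it suffices to match nonterminal derivations of $t(H)$ with $\HL$-derivations of $f_G(G) \to S$. The key bridge is invertibility of $(\times\to)$ (Remark~\ref{remark_invertibility}): every type in $\triangleright$ has the form $\times(T_a^\bullet + \sum_r k_r^e \cdot \DPO(r)^\bullet)$, so $f_G(G) \to S$ is equiderivable with $t(H) + \sum_r K_r \cdot \DPO(r)^\bullet \to S$ for $K_r = \sum_{e \in E_H} k_r^e$, by unfolding each $\times$-compressed hyperedge into its component hypergraph.

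For the inclusion $L_c(HGr) \subseteq L(\mathrm{LG}_c(HGr'))$, assume $H \in L_c(HGr)$ via a derivation $S^\bullet \Rightarrow^\ast H$ of length $m \le c \cdot |E_H|$. Translating to $HGr'$ yields a nonterminal derivation $S^\bullet \Rightarrow^\ast t(H)$ of length $m$. Applying Lemma~\ref{lemma_dpo_many} with $Y = S^\bullet$, $A = S$ (so that $Y \to A$ is an axiom) gives derivability in $\HL$ of $t(H) + \sum_r k_r \cdot \DPO(r)^\bullet \to S$ with $\sum_r k_r = m \le c |E_H|$. A trivial bin-packing step now distributes the floating hyperedges: pick integers $k_r^e \ge 0$ with $\sum_e k_r^e = k_r$ for every $r$ and $\sum_r k_r^e \le c$ for every terminal edge $e$, which is possible because $|E_H|$ bins of capacity $c$ hold at most $c |E_H|$ items. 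Applying $(\times\to)$ once per edge to compress $T_{lab_H(e)}^\bullet + \sum_r k_r^e \cdot \DPO(r)^\bullet$ into a single $\times$-labeled hyperedge yields a derivation of $f_G(G) \to S$ with $f_G$ a valid assignment, so $H \in L(\mathrm{LG}_c(HGr'))$.

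For the converse inclusion, take $H \in L(\mathrm{LG}_c(HGr'))$ witnessed by $f_G$ with $f_G(e) = \times(T_{lab_H(e)}^\bullet + \sum_r k_r^e \cdot \DPO(r)^\bullet)$ and $\sum_r k_r^e \le c$. Invertibility of $(\times\to)$ yields derivability of $t(H) + \sum_r K_r \cdot \DPO(r)^\bullet \to S$ with $\sum_r K_r \le c|E_H|$. Invoking Lemma~\ref{lemma_dpo_main} specialized to $b_r = 0$ (where its proof uses only $\HL$-rules, since Case~2 never fires) produces a nonterminal derivation $S^\bullet \Rightarrow^\ast t(H)$ in $HGr'$; inspecting Case~1 of that induction shows the derivation length equals exactly $\sum_r K_r \le c|E_H|$. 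Translating back to $HGr$ gives $S^\bullet \Rightarrow^\ast H$ of the same length, certifying $H \in L_c(HGr)$.

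The principal obstacle is the length bookkeeping: one must verify that in the inductive proof of Lemma~\ref{lemma_dpo_main} each application of $(\div\to)$ with principal type $\DPO(r_0)$ corresponds to exactly one DPO step $\underset{r_0}{\Rightarrow}$ and that $\sum_r k_r$ decreases by exactly one at each such step, so the derivation length is preserved without hidden inflation. The bin-packing and the invertibility manipulations are routine by comparison.
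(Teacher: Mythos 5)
Your proof is correct and takes essentially the same route as the paper: normalize the grammar, use Construction \ref{construction_truncated_dpo_grammar}, and match the floating $\DPO(r)$-labeled hyperedges against rule applications via Lemmas \ref{lemma_dpo_many} and \ref{lemma_dpo_main} together with the invertibility of $(\times\to)$, with the count of floating hyperedges giving the length bound. The only (harmless) divergence is indexing: the paper's Lemma \ref{th_truncated_dpo_to_hl} states $L(\mathrm{LG}_{c-1}(HGr)) = L_c(HGr)$ for the normalized grammar, whose derivations include the $|E_H|$ forced terminal steps, whereas your $L(\mathrm{LG}_{c}(HGr')) = L_c(HGr)$ measures derivation length in the original grammar; both versions yield the theorem.
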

\begin{lemma}\label{th_truncated_dpo_to_hl}
	If $HGr$ is a normalized DPO grammar and $1 \le c \in \mathbb{N}$, then $L(\mathrm{LG}_{c-1}(HGr)) = L_{c}(HGr)$.
\end{lemma}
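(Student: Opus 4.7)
The plan is to reduce the statement to Lemma \ref{lemma_dpo_main} by separately accounting for the nonterminal and terminal steps of any successful derivation in $HGr$. First I would argue that every derivation $S^\bullet \Rightarrow^\ast H$ can be reorganized so that all terminal rule applications come after all nonterminal ones. Since $a$-labeled hyperedges are inert to nonterminal rules (whose left-hand sides only mention labels in $N \cup \{T_a \mid a \in \Sigma\}$), and since an $a$-edge, once produced by a terminal step, is never consumed by any later rule, any terminal step commutes past any later nonterminal step without affecting the outcome. Moreover each hyperedge of the terminal result $H$ is produced by exactly one terminal step, so the terminal phase consists of precisely $|E_H|$ steps. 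Hence the derivation decomposes as $S^\bullet \Rightarrow^K t(H) \Rightarrow^{|E_H|} H$, where $K = \sum_{r \in P_N} K_r$ and $K_r$ counts the applications of rule $r$. The condition $H \in L_c(HGr)$ thus becomes: such a decomposition exists with $\sum_r K_r \le (c-1)|E_H|$.

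Next I would invoke Lemma \ref{lemma_dpo_main} in the special case $b_r \equiv 0$, in which case its proof stays entirely inside $\HL$ because the exponential-related Case 2 never arises. This yields: a nonterminal derivation $S^\bullet \Rightarrow^\ast t(H)$ using exactly $K_r$ applications of each $r \in P_N$ exists iff $\HL \vdash t(H) + \sum_{r \in P_N} K_r \cdot \DPO(r)^\bullet \to S$. On the $\HL$-grammar side, $H \in L(\mathrm{LG}_{c-1}(HGr))$ means that one can choose per-edge nonnegative integers $k_r^e$ with $\sum_r k_r^e \le c - 1$ for every $e \in E_H$ so that $\HL \vdash f_H(H) \to S$, where $f_H(e) = \times(T_{lab_H(e)}^\bullet + \sum_r k_r^e \cdot \DPO(r)^\bullet)$. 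Repeated application of the invertibility of $(\times\to)$ from Remark \ref{remark_invertibility}, once per hyperedge of $H$, shows that this last sequent is equiderivable with $t(H) + \sum_r K_r \cdot \DPO(r)^\bullet \to S$ where $K_r := \sum_{e \in E_H} k_r^e$.

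Combining the two reductions, membership of $H$ in either language is equivalent to derivability in $\HL$ of the same sequent $t(H) + \sum_r K_r \cdot \DPO(r)^\bullet \to S$ for some nonnegative $K_r$ with $\sum_r K_r \le (c-1)|E_H|$. The only remaining bookkeeping is the translation between the aggregate counts $K_r$ and the per-edge counts $k_r^e$: one direction, $\sum_r K_r \le (c-1)|E_H|$, is immediate from $\sum_r k_r^e \le c-1$; the converse is an elementary bin-packing, distributing $\sum_r K_r$ units into $|E_H|$ bins of capacity $c-1$, which is feasible since the total capacity $(c-1)|E_H|$ suffices. I expect the main difficulty to lie in the first paragraph, namely giving a careful DPO-level justification that terminal and nonterminal steps commute as needed and that each hyperedge of the final $H$ is the image of a unique terminal step, especially in the presence of intermediate $T_a$-hyperedges that may be created and then destroyed by nonterminal rules and never reach $H$.
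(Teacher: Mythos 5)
Your proposal is correct and takes essentially the same route as the paper: both inclusions are reduced to Lemma \ref{lemma_dpo_main} (in the exponential-free case $b_r=0$, hence within $\HL$) via the invertibility of $(\times\to)$, using the exact correspondence between $\DPO(r)$-labeled hyperedges and rule applications and the per-edge versus aggregate count translation that the paper illustrates in Example \ref{ex_dpo_to_hl}. Your explicit commutation of terminal past nonterminal steps and the bin-packing remark only spell out bookkeeping the paper leaves implicit.
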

The inclusion $L(\mathrm{LG}_{c-1}(HGr)) \supseteq L_{c}(HGr)$ is proved by using Lemma \ref{lemma_dpo_many} in the same way as in Example \ref{ex_dpo_to_hl}. The other inclusion is proved by using Lemma \ref{lemma_dpo_main} and by noticing that the number of $\DPO(r)$-labeled hyperedges in an antecedent here strictly corresponds to the number of rule applications in $HGr$ (since we do not have types with the exponential).

The idea of Construction \ref{construction_truncated_dpo_grammar} and of Theorem \ref{th_truncated_dpo_to_hl} is that we store $\DPO(r)$-labeled hyperedges in each type corresponding to a terminal symbol (since we cannot store them in the succedent as in Construction \ref{construction_hmel_from_dpo}). Then, for $G \in L(\mathrm{LG}_{c}(HGr))$, after we replace each symbol $a$ in $G$ by a type $\times\left(T_a^\bullet + \sum\limits_{r \in P_N} k_{r}\cdot \DPO(r)^\bullet\right)$ for some $k_r$ and start deriving a corresponding sequent, these hyperedges eventually appear in the antecedent where they play their role shown in Lemma \ref{lemma_dpo_many}. The total number of these hyperedges, however, is limited by the number of hyperedges in $G$, hence the language $L_c(HGr)$ is generated instead of $L(HGr)$. 

Theorem \ref{th_truncated_dpo_to_hl} says that $\HL$-grammars are powerful enough to generate hypergraphs of a language generated by a DPO grammar such that the number of steps in their derivation is bounded by a linear function of the number of hyperedges. It might be the case for a DPO grammar $HGr$ that $L(HGr)=L_c(HGr)$ for some $c\in \mathbb{N}$; in fact, we claim that for each $\HL$-grammar $HLGr=\langle \Sigma,S,\triangleright \rangle$ there is a DPO grammar $HGr$ and $c\in\mathbb{N}$ such that $L(HLGr) = L_c(HGr)=L(HGr)$, although we do not prove this here (this should be a matter of another paper). In general, however, $L(HGr)\ne L_c(HGr)$ (e.g. in Example \ref{ex_dpo_to_hl} $L_{k+1}(HGr)$ contains only graphs $G$ such that $|V_G|< k\cdot |E_G|$).

\section{HL with the Conjunctive Kleene Star}\label{sec_cks}

In Section \ref{sec_hmel}, we added the exponential modality to $\HL$, which helped us to prove Theorem \ref{th_dpo_to_hmel}. Another operation, which behaves similarly, is Kleene star. The Lambek calculus with Kleene star is studied by several researchers, in particular, in \cite{Kuznetsov21}. An extension of $\mathrm{L}$ by intersection, union, and Kleene star is known as infinitary action logic \cite{Palka07}, or, in the algebraic setting, as the logic of action algebras. Note that Kleene star can be described in terms of actions within a transition system: if $A$ is a class of actions, then $A^\ast$ means actions from $A$ repeated several times \cite{Kuznetsov21}. This understanding is very close to what we use in Construction \ref{construction_hmel_from_dpo} since our goal is to be able to apply encoded DPO rules arbitrarily many times within $\HL$. Hence, let us now look at the Lambek calculus with the unit and Kleene star $\LIKS$ and, more specifically, at the rules for Kleene star \cite{Kuznetsov21}:
$$
\infer[(\to^\ast),\; n \ge 0]{\Pi \to A^\ast}{\Pi \to A^n}
\qquad\qquad
\infer[(^\ast\to)_\omega]{\Gamma, A^\ast, \Delta \to B}{\left( \Gamma, A^n, \Delta \to B \right)_{n=0}^\infty}
$$
Here $A^0 \eqdef \mathbf{1}$, $A^{n+1}\eqdef A^n\cdot A$; $\mathbf{1}$ is the unit of the product satisfying the axiom $A\cdot \mathbf{1} \leftrightarrow A$. Note that the rule $(\to^\ast)$ is in fact a countable set of rules for each $n \ge 0$; contrarily, $(^\ast\to)_\omega$ is a sinlge rule with countably many premises. Let us clarify the notion of being derivable in this calculus: the set of derivable sequents in $\LIKS$ is the least set $S$ containing all axioms of $\LIKS$ (i.e., all sequents of the form $A \to A$ and the sequent $\to \I$) such that it is closed under applications of all inference rules (i.e., if, for some rule, all sequents above the line belong to $S$, then the sequent below the line must also belong to $S$). In other words, a derivation in $\LIKS$ is again a sequence of rule applications, which now can be countable in size but which does not have branches of infinite length.

Unfortunately, the rules for Kleene star work in an undesirable way: they allow unlimited copying types in succedents of sequents (namely, if we have $n$ copies of $A$ in a succedent, then we can wrap them into a single type $A^\ast$) but not in antecedents. This motivates us to consider an operation behaving dually:
$$
\infer[(\to^\ast)_\omega]{\Pi \to \prescript{\ast}{}{A}}{\left( \Pi \to A^n \right)_{n=0}^\infty }
\qquad\qquad
\infer[(^\ast\to),\; n \ge 0]{\Gamma, \prescript{\ast}{}{A}, \Delta \to B}{\Gamma, A^n, \Delta \to B }
$$
We call the operation $\prescript{\ast}{}{A}$ \emph{the conjunctive Kleene star}. Algebraically, it can be defined in complete residuated lattices using infinitary conjunction as $\prescript{\ast}{}{a} = \bigwedge\limits_{n=0}^\infty a^n = \inf\{a^n \mid n \in \mathbb{N}\}$ (this is why we call it conjunctive). Note that the language semantics of this operation is poor: if $L$ is a language ($L \subseteq \Sigma^\ast$), $\mathbf{1}$ equals $\{\Lambda\}$, and multiplication of languages means pairwise concatenation of their words while conjunction means intersection, then $\prescript{\ast}{}{L}=\{\Lambda\}$ if $\Lambda \in L$ and $\prescript{\ast}{}{L}= \emptyset$ otherwise.

Unfortunately, we have found little about this operation in the literature. Nevertheless, it is mentioned in \cite{Montagna04} in the context of storage operators. A storage operator $I$ in an MTL-algebra works as follows: $I(a)$ is the greatest idempotent among those not greater than $a$ (it must exist in an MTL-algebra with storage). It can be shown that, if $\inf\{a^n \mid n \in \mathbb{N}\}$ exists, then it equals $I(a)$, so these operators are very close. In \cite{Montagna04}, it is mentioned that the storage operator has many analogies with Girard's exponential $!$.

Let us show how to generalize the conjunctive Kleene star and inference rules for it to hypergraphs. A question arises: how should one understand an iteration of a type, namely, $A^n$? In the string case, this means repeating a type $n$ times and writing copies in line connecting them by $\cdot$. We need to extend  this iteration procedure to hypergraphs. We suggest the following general definitions:
\begin{definition}
	A \emph{template} $T$ of rank $k$ is a hypergraph $T = \langle V_T,[2],att_T,lab_T,ext_T\rangle$ such that $rk_T(1)=rk_T(2)=rk(T)=k$. In other words, $T$ has two hyperedges of the same rank, which coincides with the rank of $T$. Hereinafter $T(H_1,H_2)$ is a shorthand notation for $T[1/H_1,2/H_2]$.
\end{definition}
\begin{definition}
	A template $T$ of rank $k$ is \emph{monoidal} if for all hypergraphs $A,B,C$ of rank $k$ it holds that 1. $T(A,T(B,C)) \cong T(T(A,B),C)$, 2. a hypergraph $U_T$ of rank $k$ exists such that $T(U_T,A)\cong T(A,U_T) \cong A$.
\end{definition}
\begin{definition}
	The \emph{$T$-iteration} $T^n(A)$ of a type $A$ (where $T$ is a monoidal template) such that $rk(A)=rk(T)$ is defined as follows: $T^0(A) \eqdef U_T$; $T^{n+1}(A) \eqdef T(T^n(A),A^\bullet)$ (for $n \ge 0$).
\end{definition}
\begin{example}
	Two examples of monoidal templates are
	\begin{itemize}
		\item $O = \vcenter{\hbox{{\tikz[baseline=.1ex]{
						\node[hyperedge] (E1) {$X$};
						\node[hyperedge,right=3mm of E1] (E2) {$X$};
		}}}}$ (i.e., $V_{O} = \emptyset$, $E_{O} = [2]$, $att_{O}(1) = att_{O}(2) = ext_{O}=\Lambda$). Note that $U_O = \Dis[0]$.
		\item $Str = \vcenter{\hbox{{\tikz[baseline=.1ex]{
						\node[node, label=left:{\scriptsize $(1)$}] (N1) {};
						\node[node,right=7mm of N1] (N2) {};
						\node[node,right=7mm of N2, label=right:{\scriptsize $(2)$}] (N3) {};
						\draw[>=stealth,->,black] (N1) -- node[above] {$Y$} (N2);
						\draw[>=stealth,->,black] (N2) -- node[above] {$Y$} (N3);
		}}}}$.
	\end{itemize}
	Here $X,Y$ are arbitrary labels, they do not matter. Note that 	$O(H,G) = H+G$ for zero-rank $H,G$. Consequently, $O^m(A) = m\cdot A^\bullet$ (where $rk(A)=0$).
\end{example}

Using monoidal templates we can define the hypergraph conjunctive Kleene star. Types of the hypergraph Lambek calculus with the conjunctive Kleene star $\prescript{\ast}{}{\HL}_{\omega}$ are built as described in Section \ref{ssec_HL_HLG} but we add one more item to the definition: if $A$ is a type such that $rk(A)=n$ and if $T$ is a monoidal template of rank $n$, then $\prescript{\ast}{T}{A}$ is also a type of rank $n$. We also add two inference rules for the new operation:
$$
\infer[(\to^\ast)_\omega]{H \to \prescript{\ast}{T}{A}}{\left( H \to \times(T^n(A)) \right)_{n=0}^\infty }
\qquad\qquad
\infer[(^\ast\to),\; n \ge 0]{G[e/\left(\prescript{\ast}{T}{A}\right)^\bullet] \to B}{G[e/T^n(A)] \to B}
$$
Usual logical questions concerning $\prescript{\ast}{}{\HL}_{\omega}$ arise. In particular, the cut elimination theorem can be proved:
\begin{theorem}\label{theorem_cut}
	If $\prescript{\ast}{}{\HL}_{\omega} \vdash H\to A$ and $\prescript{\ast}{}{\HL}_{\omega} \vdash G[e/A^\bullet]\to B$, then $\prescript{\ast}{}{\HL}_{\omega} \vdash G[e/H]\to B$.
\end{theorem}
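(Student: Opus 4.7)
The plan is to prove cut admissibility by transfinite induction, adapting the strategy used for $\HL$ in \cite[Appendix A]{Pshenitsyn22} to the infinitary setting created by the rule $(\to^\ast)_\omega$. First I would assign an ordinal complexity $|A|$ to every type so that $|\prescript{\ast}{T}{A}| > |\times(T^n(A))|$ for every $n \ge 0$; a choice that works is $|\prescript{\ast}{T}{A}| \eqdef \omega \cdot (|A| + |T| + 1)$, where $|T|$ sums the complexities of the types labeling $T$ and $|\cdot|$ extends the finite complexity used for $\times$ and $\div$. Each derivation is then assigned an ordinal rank equal to the height of its well-founded proof tree: at a node whose premises have ranks $\alpha_i$, the rank is $\sup_i \alpha_i + 1$, which is finite at finitary rules but jumps to transfinite ordinals at $(\to^\ast)_\omega$. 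The outer induction runs on $|A|$ (the cut formula) and the inner induction on the natural sum of the ranks of the two given derivations.

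The routine cases go through as in $\HL$. Axioms dispatch the cut trivially. If the cut formula $A$ is not principal in the last rule of the left derivation of $H \to A$, that rule commutes with the cut: apply the inner inductive hypothesis to the cut paired with each premise, then reapply the rule; symmetrically for the right derivation whenever its last rule does not touch the distinguished hyperedge $e$. When $A$ is an ordinary $\HL$-type and is principal on both sides (so that the last rules on both derivations introduce it via $\div$ or $\times$), the standard reduction produces cuts on strictly smaller subtypes, which are discharged by the outer induction.

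The genuinely new case is when $A = \prescript{\ast}{T}{A'}$ is principal on both sides. The left derivation ends with $(\to^\ast)_\omega$ and therefore supplies, for every $n \ge 0$, a derivation of $H \to \times(T^n(A'))$; the right derivation ends with $(^\ast\to)$ for some specific $n_0 \ge 0$ whose premise is $G[e / T^{n_0}(A')] \to B$. I would apply $(\times\to)$ to this premise, compressing the substituted copy of $T^{n_0}(A')$ inside $G[e / T^{n_0}(A')]$ into a single hyperedge labeled $\times(T^{n_0}(A'))$, to obtain $G[e / (\times(T^{n_0}(A')))^\bullet] \to B$. The outer inductive hypothesis now applies with cut formula $\times(T^{n_0}(A'))$, whose complexity is strictly less than $|\prescript{\ast}{T}{A'}|$, combined with the $n_0$-th premise $H \to \times(T^{n_0}(A'))$ on the left, yielding exactly the desired sequent $G[e/H] \to B$. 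The main obstacle I expect is the bookkeeping around transfinite ranks: one must check that in the commuting cases the inner parameter strictly decreases even when a premise above has rank $\ge \omega$ (which it does, because the sum of ordinal ranks decreases under strict decrease of one of the summands), and verify that $(\times\to)$ cleanly isolates the substituted copy of $T^{n_0}(A')$ inside $G[e / T^{n_0}(A')]$; the latter follows directly from the definition of hyperedge replacement and the monoidality of $T$.
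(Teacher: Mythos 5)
Your proposal is correct and follows essentially the same route as the paper, which only states that the theorem ``is proved by a transfinite induction in a similar way to that from [Palka07]'': you reconstruct exactly that argument, with an ordinal complexity making $|\prescript{\ast}{T}{A'}|$ dominate every $|\times(T^n(A'))|$, transfinite ranks on well-founded proof trees to handle the $\omega$-branching of $(\to^\ast)_\omega$ in the commuting cases, and the key principal-principal reduction that pairs the $n_0$-th premise of $(\to^\ast)_\omega$ with the $(^\ast\to)$ premise via a cut on the strictly smaller type $\times(T^{n_0}(A'))$. No gaps beyond the routine bookkeeping you already identify.
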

The theorem is proved by a transfinite induction in a similar way to that from \cite{Palka07}.

Note that we can define \emph{the hypergraph Kleene star} generalizing $A^\ast$ studied in \cite{Kuznetsov21,Palka07} in the same way as $\prescript{\ast}{T}{A}$. Even if the conjunctive Kleene star is something weird and useless, we think that the definitions of template, of $T$-iteration and so on are useful in the respect that using them we could define and study the hypergraph Kleene star.

Returning to $\prescript{\ast}{}{\HL}_\omega$, we can now define $\prescript{\ast}{}{\HL}_\omega$-grammars and repeat Construction \ref{construction_hmel_from_dpo}:
\begin{construction}
	Let $HGr=\langle N, \Sigma, P, S^\bullet \rangle$ be a normalized grammar. Then $\mathrm{LG}_\omega(HGr) = \langle \Sigma, S^\prime, \triangleright \rangle$ where $\triangleright$ consists of pairs $a \triangleright T_a$, and
	$
	S^\prime = S \div \left(\sum\limits_{r \in P_N} \left(\prescript{\ast}{O}{\DPO(r)}\right)^\bullet+\$_0^\bullet\right).
	$
	\\
	Here we apply the hypergraph conjunctive Kleene star to each type $\DPO(r)$ (for $r \in P_N$) and store the result in $S^\prime$. This trick enables us to prove the following
\end{construction}
\begin{theorem}\label{th_dpo_to_hl_dks}
	If $HGr$ is a normalized DPO grammar, then $L(\mathrm{LG}_\omega(HGr)) = L(HGr)$.
\end{theorem}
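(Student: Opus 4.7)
The plan is to mirror the proof of Lemma~\ref{th_dpo_to_hmel}, with the exponential $!\DPO(r)$ replaced throughout by the conjunctive Kleene star $\prescript{\ast}{O}{\DPO(r)}$. Writing $t$ for the relabeling $e \mapsto T_{lab_H(e)}$, I would establish the chain
$$H \in L(\mathrm{LG}_\omega(HGr)) \iff \prescript{\ast}{}{\HL}_\omega \vdash t(H) \to S^\prime \iff \prescript{\ast}{}{\HL}_\omega \vdash t(H) + \sum_{r \in P_N}(\prescript{\ast}{O}{\DPO(r)})^\bullet \to S \iff S^\bullet \Rightarrow^\ast t(H) \iff H \in L(HGr).$$
The second biconditional is the invertibility of $(\to\div)$, a standard consequence of Theorem~\ref{theorem_cut} obtained exactly as in Remark~\ref{remark_invertibility}; the last biconditional is immediate from the normalized form of $HGr$, since terminal rules merely rewrite each $T_a$ back to $a$.

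The third biconditional is the crux. For it I would prove the following analog of Lemma~\ref{lemma_dpo_main}: for any hypergraph $Y^\prime$ and nonterminal $X$,
$$X^\bullet \Rightarrow^\ast Y^\prime \text{ using rules from } P_N \iff \prescript{\ast}{}{\HL}_\omega \vdash Y^\prime + \sum_r b_r (\prescript{\ast}{O}{\DPO(r)})^\bullet + \sum_r k_r \DPO(r)^\bullet \to X \text{ for some } b_r, k_r \in \mathbb{N}.$$
The key algebraic identity is $O^n(\DPO(r)) = n \cdot \DPO(r)^\bullet$, since $O$ implements disjoint union on zero-rank hypergraphs and $\DPO(r)$ has rank~$0$. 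The ``only if'' direction then follows from Lemma~\ref{lemma_dpo_many} combined with applications of $(^\ast\to)$ taken with parameter $n = 0$: because $O^0(\DPO(r)) = U_O = \Dis[0]$ is empty, such a step introduces a fresh $(\prescript{\ast}{O}{\DPO(r)})^\bullet$ hyperedge without otherwise altering the sequent.

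The ``if'' direction relies on cut elimination. A preliminary subformula inspection shows that in our setting $\prescript{\ast}{O}{\DPO(r)}$ can never arise in the succedent of a subgoal of a cut-free derivation of the stated sequent: the only antecedent types that mention it are $\prescript{\ast}{O}{\DPO(r)}$ itself (on which only $(^\ast\to)$ acts, preserving the succedent) and $\DPO(r) = \times(\widehat{L}) \div (\widehat{R} + \$_0^\bullet)$, in which the right-hand side $\widehat{R} + \$_0^\bullet$ contains only primitive nonterminal labels and $\$_0$. Consequently the omega rule $(\to^\ast)_\omega$ is never applied and each cut-free derivation in question is finite. I would then proceed by induction on its length. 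Since the succedent is the primitive type $X$, the only rules that can be applied at the root are $(\div\to)$ and $(^\ast\to)$. The case $(\div\to)$ is handled exactly as in Case~1 of the proof of Lemma~\ref{lemma_dpo_main}, simulating a single application of the nonterminal rule $r_0$; the case $(^\ast\to)$ with parameter $n$ on some hyperedge labeled $\prescript{\ast}{O}{\DPO(r_0)}$ yields a premise with $b_{r_0}$ decreased by $1$ and $k_{r_0}$ increased by $n$, to which the induction hypothesis applies directly.

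The main obstacle will be the rigorous justification that $\prescript{\ast}{O}{\DPO(r)}$ never reaches the succedent of a subgoal: without this, $(\to^\ast)_\omega$ could introduce infinitary branching and the induction on derivation length would collapse. The remaining case analysis is structurally identical to, but in fact cleaner than, the $\HMEL$ argument, because the single rule $(^\ast\to)$ absorbs the roles of $(!\to)$, $(\mathrm{w})$, and $(\mathrm{c})$ via its choice of parameter~$n$.
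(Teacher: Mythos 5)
The paper states this theorem with no proof at all, so there is nothing to compare against line by line; what you supply is evidently the intended argument --- a transcription of Construction \ref{construction_hmel_from_dpo} and Lemma \ref{th_dpo_to_hmel} with the exponential replaced by the conjunctive Kleene star --- and it is essentially correct. Your two additions beyond the $\HMEL$ template are both sound and are exactly what is needed: the identity $O^n(\DPO(r)) = n\cdot\DPO(r)^\bullet$, which lets the single rule $(^\ast\to)$ absorb the roles of $(!\to)$, $(\mathrm{w})$ and $(\mathrm{c})$ through its choice of $n$; and the observation that every succedent occurring in a derivation of the goal sequent is a primitive type (the only $\div$-types reachable in antecedents are the $\DPO(r)$, whose denominators $\widehat{R}+\$_0^\bullet$ carry only nonterminal labels, and $(\times\to)$, $(^\ast\to)$ preserve succedents), so no right rule --- in particular not $(\to^\ast)_\omega$ --- can fire, the derivation tree is finite, and induction on its length is legitimate. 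One instantiation is off: for the backward direction of your third biconditional you must derive the \emph{specific} sequent $t(H)+\sum_{r}(\prescript{\ast}{O}{\DPO(r)})^\bullet\to S$ with no leftover $\DPO(r)^\bullet$ hyperedges, and the $n=0$ instance of $(^\ast\to)$ only introduces fresh starred hyperedges without absorbing the $k_r$ floating copies produced by Lemma \ref{lemma_dpo_many}; you should instead apply $(^\ast\to)$ with $n=k_r$ for each $r$, packing all $k_r$ copies into one starred hyperedge. With that one-parameter change the argument goes through; the remaining delicate point (applying the induction hypothesis after inverting $(\times\to)$ in the $(\div\to)$ case) is inherited verbatim from the paper's own proof of Lemma \ref{lemma_dpo_main} and is no worse here.
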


\section{Conclusion}\label{sec_conclusion}
In this work, we have made several steps in relating DPO grammars, which represent the rule-based approach in the field of graph grammars, and HL-grammars, which are an ambassador of the type-logical approach. We have proved that any DPO grammar can be transformed into an equivalent $\HMEL$-grammar using a generalization of the method from \cite{Kanazawa92}. Then we restricted derivations in DPO grammars and proved that grammars with this restriction can be converted into equivalent $\HL$-grammars. This restriction is a promising tool in our opinion: we claim that linearly-restricted DPO grammars are equivalent to $\HL$-grammars. Recently we have successfully applied the same idea of
imposing a linear restriction to solve an open problem concerning the Lambek calculus with permutation \cite{Pshenitsyn22_preprint}. In the future, we are going to establish a precise connection between $\HL$-grammars and DPO grammars.

\section*{Acknowledgments}
I thank prof. Mati Pentus and Stepan L. Kuznetsov for fruitful discussions, and anonymous reviewers for valuable remarks at all the rounds of reviewing (in particular, for suggesting studying connections with process algebras, which resulted in starting a research of another subject).

\bibliographystyle{eptcs}
\bibliography{FG_and_HLG_CKS}

\end{document}